\newtheorem{theorem}{Theorem}
\newtheorem{lemma}[theorem]{Lemma}
\newtheorem{corollary}{Corollary}[theorem]
\newtheorem{definition}{Definition}[section]
\newtheorem{example}{Example}[section]
\newtheorem*{remark}{Remark}
\title{On the Gaussianity of Kolmogorov Complexity of Mixing Sequences}
\author{Morgane Austern, Arian Maleki}
\date{}
\begin{document}
\maketitle

\begin{abstract}
 Let $ K(X_1, \ldots, X_n)$ and $H(X_n | X_{n-1}, \ldots, X_1)$ denote the Kolmogorov complexity and Shannon's entropy rate of a stationary and ergodic process $\{X_i\}_{i=-\infty}^\infty$. It has been proved that
\[
\frac{K(X_1, \ldots, X_n)}{n} - H(X_n | X_{n-1}, \ldots, X_1) \rightarrow 0,
\]
almost surely. This paper studies the convergence rate of this asymptotic result. In particular, we show that if the process satisfies certain mixing conditions, then there exists $\sigma<\infty$ such that
$$\sqrt{n}\left(\frac{K(X_{1:n})}{n}-  H(X_0|X_1,\dots,X_{-\infty})\right) \rightarrow_d N(0,\sigma^2).$$
Furthermore, we show that under slightly stronger mixing conditions one may obtain non-asymptotic concentration bounds for the Kolmogorov complexity.
\end{abstract}

\section{Introduction}

\subsection{Motivation and objective}
Kolmogorov complexity of a binary sequence is defined as the length of the shortest program fed to a universal Turing machine that would print the sequence and halt. More formally, let $U$ denote a Universal Turing machine. Given a program $p$ the sequence printed by $U$ is denoted with $U(p)$. 
\begin{definition}\label{Kolmogorov complexity} 
Let $\mathcal{P}_X$ denote the set of all binary programs that can generate a finite length binary sequence $X$ and halt. Then, the Kolmogorov complexity of $X$ is denoted with $K(X)$ and is defined as $$K(X) \triangleq \inf_{p \in \mathcal{P}_X} {\rm length}(p),$$ where ${\rm length}(p)$ denotes the length of the sequence. Furthermore, the Kolmogorov complexity of any finite-length finite-alphabet sequence is the Kolmogorov complexity of its binary representation.
\end{definition}
 Apart from its mathematical elegance, Kolmogorov complexity has exhibited promising {\em theoretical} results in other areas of research including inductive inference \cite{solomonoff1978complexity}, denoising \cite{donoho2002kolmogorov}, linear regression \cite{jalali2011minimum}, density estimation \cite{barron1991minimum}, etc. However, such theoretical results are overshadowed by the fact that Kolmogorov complexity is not computable (see Theorem 1.5 in \cite{zvonkin1970complexity}). 

 Both the usefulness of the Kolmogorov complexity and its incomputability has motivated researchers to find approximations of this quantity. One of the main approaches is to restrict the class of sequences to stationary and ergodic sequences, and use the properties of such sequences to find good approximations. The following theorem, due to Levin, clarifies why such assumptions might be useful.

\begin{theorem}\cite{zvonkin1970complexity} \label{thm:first_order}
Let $\{X_i\}_{i = -\infty}^{\infty}$ denote a binary stationary and ergodic process (under left shift), whose law is computable. Let its Shannon conditional entropy rate be $H(X_1|X_0,\dots,X_{-\infty})$. Then,\footnote{This is Theorem 5.1 of \cite{zvonkin1970complexity}. As mentioned by Levin even ergodicity is not necessary, but then we should be careful in defining the entropy. For more information refer to \cite{zvonkin1970complexity}. }
\begin{equation*}
\frac{K(X_{1:n})}{n} \overset{a.s.}{\rightarrow} H(X_1|X_0,\dots,X_{-\infty}), 
\end{equation*}
where $X_{1:n}$ denotes the vector $(X_1, X_2, \ldots, X_n)$. 
\end{theorem}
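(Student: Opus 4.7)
The plan is to sandwich $K(X_{1:n})$ between two quantities that both converge, after division by $n$, to $H(X_1\mid X_0,\dots,X_{-\infty})$. The natural companion is $-\log P(X_{1:n})$, where $P$ denotes the law of the process: the Shannon--McMillan--Breiman theorem for stationary ergodic processes guarantees
$$-\frac{1}{n}\log P(X_{1:n})\;\xrightarrow{a.s.}\; H(X_1\mid X_0,\dots,X_{-\infty}),$$
so it suffices to establish the two-sided estimate $|K(X_{1:n})+\log P(X_{1:n})|=o(n)$ almost surely. I would prove the matching upper and lower bounds separately.

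For the upper bound, I would exploit computability of $P$ explicitly. Given $n$, a Turing machine can compute $P(x)$ for every length-$n$ string $x$ to arbitrary precision, and hence can simulate Shannon--Fano--Elias or arithmetic coding relative to $P$. This yields a prefix-free codeword of length at most $\lceil-\log P(X_{1:n})\rceil+2$ from which $X_{1:n}$ is uniquely decodable. Hard-coding the (fixed, finite) descriptions of the decoder and of $P$, and self-delimitingly encoding $n$ in $O(\log n)$ bits, one obtains a program for $X_{1:n}$ whose total length is at most $-\log P(X_{1:n})+O(\log n)$. Dividing by $n$ and invoking Shannon--McMillan--Breiman yields $\limsup_n K(X_{1:n})/n\le H$ almost surely.

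For the lower bound, I would use the counting bound on short programs. For each $n$ and $c>0$, set $B_n(c)=\{x\in\{0,1\}^n:K(x)\le-\log P(x)-c\}$. For $x\in B_n(c)$ one has $P(x)\le 2^{-K(x)-c}$, and since any length-$n$ string admits a trivial program of length $n+O(1)$, the value of $K(x)$ ranges over only $k\in\{0,1,\dots,n+O(1)\}$. Splitting over $k$ and using the standard bound $|\{x:K(x)=k\}|\le 2^{k+1}$ gives
$$P(B_n(c))\le\sum_{k=0}^{n+O(1)} 2^{k+1}\cdot 2^{-k-c}\le (n+O(1))\,2^{1-c}.$$
Taking $c_n=3\log n$ makes these probabilities summable in $n$, so the Borel--Cantelli lemma yields $K(X_{1:n})\ge -\log P(X_{1:n})-3\log n$ for all sufficiently large $n$ almost surely. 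Combining with Shannon--McMillan--Breiman, $\liminf_n K(X_{1:n})/n\ge H$ almost surely, completing the proof.

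The main obstacle is handling the computability assumption cleanly in the upper bound: one must verify that a Turing machine, given a finite description of $P$ and a self-delimiting encoding of $n$, can actually realize arithmetic coding to sufficient precision without inflating the description length beyond $O(\log n)$. A secondary subtlety is the gap between plain and prefix Kolmogorov complexity, but this is $O(\log K)$ and vanishes after normalizing by $n$; ergodicity itself enters only through Shannon--McMillan--Breiman, not through the complexity-theoretic bounds.
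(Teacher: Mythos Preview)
The paper does not give its own proof of this theorem: it is quoted as Theorem~5.1 of Zvonkin--Levin and used only as background. So strictly speaking there is nothing to compare your argument against. Your sketch is the standard one and is essentially correct: Shannon--Fano--Elias (or arithmetic) coding with the computable law $P$ furnishes the upper bound $K(X_{1:n})\le -\log P(X_{1:n})+O(\log n)$, the level-set counting argument plus Borel--Cantelli gives the matching lower bound, and the Shannon--McMillan--Breiman theorem closes the sandwich.

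It is worth noting where your ingredients resurface in the paper's proof of its main result (Theorem~\ref{thm:quantized}). Your lower-bound counting argument is exactly what is used in Section~\ref{ssec:upperbound}: there the authors bound $P\bigl(K(X_{1:n})/n<-\log P(X_{1:n})/n-\delta_n\bigr)$ by splitting over the value of $K$ and using $|\{v:K(v)=i\}|\le 2^i$, the same computation you wrote down. The upper bound, however, is handled differently. Rather than arithmetic coding with the true law (which is where you need computability of $P$), the paper builds an explicit type-class program: describe the first $m_n$ symbols, the empirical $(m_n{+}1)$-block frequencies, and the index of $X_{1:n}$ within its type. This costs $-\,(n-m_n)\sum_j f_j^{m_n,n}\log Q_j^{m_n}$ plus lower-order terms and does \emph{not} require the law to be computable. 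The trade-off is that your route is shorter and more direct for the first-order statement, while the paper's type-class encoding produces a bound in terms of empirical quantities that can be analysed to second order, which is what is needed for the CLT.

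One small caveat on your side: the paper's remark that ``Theorem~\ref{thm:quantized} implies Theorem~\ref{thm:first_order}'' should not be read as an alternative proof you could copy, since Theorem~\ref{thm:quantized} carries extra mixing and $\nu_\delta$ hypotheses and yields only convergence in distribution of the rescaled quantity, hence only $K(X_{1:n})/n\to H$ in probability under those stronger assumptions. Your Borel--Cantelli step is what delivers the almost-sure statement in full generality.
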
 
According to this theorem Shannon's entropy can be seen  as an approximation of the Kolmogorov's complexity of the process. This result is asymptotic and it is not clear how accurate this approximations is even if $n$ is large. This paper establishes the accuracy of this approximation under certain mixing assumptions on the process, which will be clarified later. 

\subsection{Related work}
Komogorov complexity evolved in the seminal papers of Solomonoff \cite{solomonoff1964formal1, solomonoff1964formal2}, Kolmogorov \cite{kolmogorov1965three}, and Chaitin \cite{chaitin1969simplicity, chaitin1966length, chaitin1975theory, chaitin1974information}. Each author developed and used this quantity for different purposes. For instance, inspired by Shannon's theory of information, Kolmogorov developed his notion of complexity to quantify the amount of information that is present in a sequence of bits. Kolmogorov also conjectured that binary sequences that have maximal complexity, e.g. $K(X_1, X_2, \ldots, X_n) \geq n-c$, for some fixed $c$, are random in an intuitive sense.This conjecture was later established by Martin-Lof. Intuitively speaking he proved that if a sequence satisfies $K(X_1, X_2, \ldots, X_n) \geq n-c$, then any test that can be implemented by a turing machine will accept the randomness of this sequence (it should possibly use a different significance level). The intellectual value of this test of randomness was overshadowed by the incomputability of Kolmogorov complexity. 

Many researchers explored new ways to improve the applicability of Kolmogorov complexity. For instance,   \cite{willis1970computational, chaitin1974information, hartmanis1983generalized, sipser1983complexity} explored computable approximations of Kolmogorov complexity. Another popular direction of research pursued connections between Kolmogorov's complexity and Shannon entropy \cite{kolmogorov1965three, zvonkin1970complexity, v1998ergodic, brudno1982entropy}. Levin's result, i.e. Theorem \ref{thm:first_order},  is one of the most general connections between Kolmogorov complexity and Shannon entropy. In this paper, we push these connections one step further by providing convergence rate and concentration results for the Kolmogorov complexity.

\section{Main result}\label{sec:mainresult}


 According to Theorem \ref{thm:first_order}, for every stationary and ergodic sequence, $\{X_i\}_{i=-\infty}^{\infty}$, on  probability space $\{\Omega, \mathbb{P},\mathcal{A}\}$, we have  
\begin{equation*}
\frac{K(X_{1:n})}{n} \overset{a.s.}{\rightarrow} H(X_1|X_0,\dots,X_{-\infty}). 
\end{equation*}
As we discussed before, in this paper we would like to characterize the rate of convergence for this asymptotic result. As our first goal, we would like to show that under some general conditions the convergence rate is $\frac{1}{\sqrt{n}}$. More specifically, we would like to show that 
\begin{equation}\label{eq:CLTtypeKC}
\sqrt{n} \left(\frac{K(X_{1:n})}{n} -H(X_1|X_0,\dots,X_{-\infty})  \right)
\end{equation}
converges in distribution to a non-degenerate random variable. Before we discuss our main theorem we would like to show that if we do not impose extra conditions on the process, the convergence rate could be slower than  $\frac{1}{\sqrt{n}}$. 
%
%
%

\begin{example}\label{ex:slower}
Let $\nu_1$ denote a probability measure  on $\mathbb{N}^*$ (the set of  positive natural numbers) with probability mass function $\nu_1(t)=  \frac{Z }{  t^{3/2-\epsilon}}$, where $\epsilon\in (0,\frac{1}{6})$ and Z is the normalizing constant. Let $\{\tau_i\}_{i=1}^\infty$ denote iid samples from this distribution. Furthermore, let $\{Y_i\}_{i=1}^\infty$ and $\{\tilde{Y}_i\}_{i=1}^\infty$ denote two independent sequences of  iid $\rm{Bern}( \frac{1}{2})$ random variables ( independent from $\{\tau_i\}_{i=1}^\infty$).  Given a natural number $a$ let $\mathbf{0}_a$ denote a vector of size $a$ with all elements being zero. We construct a binary sequence $\{\tilde{X}_i\}_{i=1}^\infty$ in the following way:

\begin{itemize}
\item[(i)]  Pick $\tau_1$ from our first sequence and set $\tilde{X}_{1: \tau_1} = Y_1 \mathbf{0}_{\tau_1} + (1-Y_1) \tilde{Y}_{1: \tau_1}$. 

\item[(ii)] To construct the $i^{\rm th}$ block we repeat what we did above. More specifically we draw $\tau_i$ and we set $\tilde{X}_{\tau_1+\ldots+ \tau_{i-1}:\tau_1+\ldots+ \tau_{i-1}+\tau_i} = Y_i \mathbf{0}_{\tau_i} + (1-Y_i) \tilde{Y}_{\tau_1+\ldots+ \tau_{i-1}:\tau_1+\ldots+ \tau_{i-1}+\tau_i}$.
\end{itemize}
 To make $\tilde{X} = \{\tilde{X}\}_{i=1}^{\infty}$ stationary, draw $\theta | \tau_1 \sim \rm{unif}(0,\tau_1-1)$ (uniform on the integers from $0$ to $\tau_1-1$). Then, we generate the new process as  $X=\Theta^{\theta} \tilde{X}$, where $\Theta$ is the left shift operator. It is straightforward to see that the process is stationary and ergodic.  Hence, $\frac{K(X_{1:n})}{n} \overset{a.s.}{\rightarrow} H(X_1|X_0,\dots,X_{-\infty})$. However, the convergence rate is slower than $1/\sqrt{n}$. The proof of our claim can be found in Section \ref{sec:proofexampleslow}. 
 \end{example}

Note that a major issue in the above example is the fact that the elements of the sequence that are far apart can still have strong dependencies. Hence, intuitively speaking we expect that if the dependency of  the process is weaker, then we may be able to obtain the $1/\sqrt{n}$ convergence rate. Mixing conditions are defined to capture the dependancies of stochastic process. We start with  mixing conditions that will be used in our paper. Let $\{X_i\}_{i=-\infty}^{\infty}$ denote a stationary and ergodic process, and let  $\mathbb{F}_{-\infty}^n= \sigma(X_i, i \le n)$ denote the $\sigma$-filed of events generated by random variables $\ldots, X_{n-2}, X_{n-1}, X_{n}$  and $\mathbb{F}_{n}^{\infty}= \sigma(X_i, i \ge n)$ denote the $\sigma$-filed of events generated by $X_n, X_{n+1}, \ldots$.

\begin{definition} $\alpha$-mixing coefficients of the process $\{X_i\}_{i=-\infty}^{\infty}$ is define as
 $$\alpha(n) \triangleq \sup_j  \sup_{\substack{A \in \mathbb{F}_{-\infty}^j \\ B  \in \mathbb{F}_{j+n}^{\infty} }} \left|P\left(A \bigcap B\right) -P(A)P(B) \right|.$$ 
A process $X$ is $\alpha$-mixing if $\alpha(n) \rightarrow 0.$
\end{definition}

$\alpha$-mixing condition ensures that the parts of the process that are far apart are almost independent. Hence, we hope that if $\alpha(n)$ decays fast enough, then it will avoid the dependency issue raised in Example \ref{ex:slower}. 
In some of our results we will need a slightly stronger notion of mixing that we define below.

\begin{definition}
The $\phi$-mixing coefficient of the process $\{X_i\}_{i=-\infty}^{\infty}$ is defined as
 $$\phi(n) \triangleq \sup_j  \sup_{\substack{A \in \mathbb{F}_{-\infty}^j \\ B  \in \mathbb{F}_{j+n}^{\infty} }} \left|P\left(B | A\right) -P(B) \right|,$$ 

Furthermore, a process is called $\phi$-mixing  if $\phi(n) \rightarrow 0.$

\end{definition}
\begin{remark}
It is straightforward to see that $\forall n$, $\phi(n) \ge \alpha(n)$. Hence, if a process is $\phi$-mixing, then it will be $\alpha$-mixing. 
\end{remark}

In addition to mixing, our proof requires another condition that is described below:

\begin{definition}\label{def:nudelta}
Let $\{X_i\}_{-\infty}^{\infty}$ denote a stationary and ergodic process. Consider $\delta>0$ and define 
\vspace{-.2cm}
$$\nu_{\delta}(n)\triangleq \mathbb{E}(|\log (P(X_0|X_{-1},\dots,X_{-\infty}))- \log (P(X_0|X_{-1},\dots,X_{-n}))|^{\frac{2+\delta}{1+\delta}}).$$
\end{definition}
Note that in this paper all the logarithms are  in base 2. Since $\nu_{\delta}(n)$ is not a standard notion in probability theory, we explain some of its interesting features below:

\begin{enumerate}

\item The definition of $\nu_\delta(n)$ is close to the definition of the Kullback-Leiber divergence between $P(X_0 | X_{-1}, \ldots, X_{-\infty})$ and $P(X_0 \ | \ X_{-1}, \ldots, X_{-n})$. Hence, it measures the discrepancy of a process from a Markov process. 

\item For a $b$-Markov source $\{X_i\}_{i=-\infty}^{\infty}$ we have $\mathcal{L}(X_1|X_{0:-b+1}) = \mathcal{L}(X_1|X_{0:-\infty}) $\footnote{ Where $\mathcal{L}(X_1|X_{0:-b+1}) $ is the conditional distribution knowing $X_{0:-b+1}$ of $X_1$, and $ \mathcal{L}(X_1|X_{0:-\infty}) $ is the conditional distribution knowing $X_{0:-\infty}$ of $X_1$}. Hence $\nu_\delta(n) =0$ for every $n \ge b$. 

\item Sequences generated by a hidden Markov model also have very fast decaying $\nu_\delta(n)$. The following lemma justifies our claim:
\begin{lemma}\label{lem:HMMmixing}
Consider a hidden Markov model with $q: \mathcal{X} \times \mathcal{X} \rightarrow (0, \infty)$ denoting the transition kernel of the underlying Markov process and  $g(\cdot \ | \ x)$ denoting the distribution of the observed variables for a given value $x$ of the hidden variable. Also, suppose that the process satisfies the following conditions: 
\begin{enumerate}
\item[(i)]  $\epsilon\triangleq\frac{{\rm essinf}\ q(x,x')}{{\rm esssup} \ q(x,x')} \in (0,1)$. 
\item[(ii)] $ 1<\eta \triangleq \sup_y \frac{{\rm esssup}_x g(y|x)}{{\rm essinf}_x g(y|x)} < \infty$.
\end{enumerate}
Then, if $Y_1, Y_2, \ldots$ is the sequence of observations generated by this process there exists a value of $\tau \in (0,1)$ only depending on $\epsilon$, $\delta$, and $\eta,$ such that
\[
\nu_\delta (n) = \mathbb{E}\left(\left|\log\left(\frac{P(Y_0 |Y_{-1:-n})}{P(Y_0 |Y_{-1:-\infty})}\right)\right|^{\frac{2+\delta} {1+\delta}}\right) \leq   C \tau^{n},
\]
where C is a constant that depends only on $\epsilon$ and $\eta$. 
\end{lemma}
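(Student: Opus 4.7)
The plan is to interpret both $P(Y_0\mid Y_{-1:-n})$ and $P(Y_0\mid Y_{-1:-\infty})$ as integrals of the emission density $g(Y_0\mid\cdot)$ against the HMM filter distributions of the hidden state $X_0$, and then exploit exponential stability of HMM filters under (i)--(ii) to control the log-ratio in a deterministic pointwise fashion. Write $\mu_n(\cdot) := P(X_0\in\cdot\mid Y_{-1:-n})$ and $\mu_\infty(\cdot) := P(X_0\in\cdot\mid Y_{-1:-\infty})$, the latter defined as the almost-sure reverse-martingale limit. Since $Y_0$ is conditionally independent of the past observations given $X_0$,
\[
P(Y_0 \mid Y_{-1:-k}) = \int g(Y_0 \mid x)\,\mu_k(dx), \qquad k\in\{n,\infty\}.
\]
Setting $g_{\min}(Y_0) := \mathrm{ess\,inf}_x g(Y_0|x)$ and $g_{\max}(Y_0) := \mathrm{ess\,sup}_x g(Y_0|x)$, condition (ii) gives $g_{\max}(Y_0)\le \eta\,g_{\min}(Y_0)$ and both predictive densities are bounded below by $g_{\min}(Y_0)>0$. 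The standard estimate $\bigl|\int f\, d(\mu-\nu)\bigr|\le (\sup f-\inf f)\,\|\mu-\nu\|_{TV}$, combined with $|\log x|\le |x-1|/\min(x,1)$, then yields a deterministic pointwise bound
\[
\left|\log\frac{P(Y_0\mid Y_{-1:-n})}{P(Y_0\mid Y_{-1:-\infty})}\right| \le C_0(\eta)\,\|\mu_n-\mu_\infty\|_{TV},
\]
for some finite $C_0(\eta)$.

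The key analytic input is then exponential stability of the HMM filter. By stationarity, $\mu_n$ is the filter obtained by running $n$ Bayesian update--prediction steps driven by $Y_{-n},\ldots,Y_{-1}$ starting from the stationary law of $X_{-n}$, while $\mu_\infty$ is the same filter initialized instead from $P(X_{-n}\in\cdot\mid Y_{-n-1:-\infty})$. Condition (i) provides a uniform Doeblin-type minorization on the transition kernel $q$, so that the one-step prediction map is a strict contraction in total variation; condition (ii) ensures that the Bayesian update operator does not distort distances too much, so that the composed filter operator remains contractive at a rate depending on $(\epsilon,\eta)$ but uniform in the observation sequence. The classical Birkhoff/Hilbert projective metric argument for HMM filter stability then delivers
\[
\|\mu_n-\mu_\infty\|_{TV}\le C_1\,\rho^n \qquad \text{almost surely},
\]
for some $\rho=\rho(\epsilon,\eta)\in(0,1)$ and $C_1=C_1(\epsilon,\eta)$, uniformly in the observations.

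Combining the two bounds yields the almost-sure deterministic estimate
\[
\left|\log\frac{P(Y_0\mid Y_{-1:-n})}{P(Y_0\mid Y_{-1:-\infty})}\right|\le C_2\,\rho^n,
\]
with $C_2 = C_0(\eta)C_1$. Raising to the power $(2+\delta)/(1+\delta)$ and taking expectations --- permitted because the bound is deterministic --- gives $\nu_\delta(n)\le C\,\tau^n$ with $\tau:=\rho^{(2+\delta)/(1+\delta)}\in(0,1)$, as claimed.

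The main obstacle is the filter-stability step: rigorously establishing exponential forgetting and tracking the dependence of $\rho$ on both $\epsilon$ and $\eta$. One must carefully construct $\mu_\infty$ via the reverse-martingale convergence theorem so that the convergence holds in total variation and not merely weakly, verify that conditions (i) and (ii) together yield a strict contraction of the combined prediction--update operator, and confirm uniformity in the observation sequence. The last point is precisely where the boundedness $\eta<\infty$ becomes essential, since an unbounded likelihood ratio could in principle amplify differences and spoil the contraction. Every other step is either algebraic or an immediate consequence of the emission bound in (ii).
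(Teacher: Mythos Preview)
Your argument is correct and takes a genuinely different route from the paper's. The paper works at the level of the \emph{predictive distributions of $Y_0$}: it invokes a cited HMM result to bound the \emph{expected} total variation $\mathbb{E}\bigl[\|P(Y_0\in\cdot\mid Y_{-1:-m})-P(Y_0\in\cdot\mid Y_{-1:-\infty})\|_{TV}\bigr]\le C\tau^m$, and then uses the pointwise inequality $|\log x|^{(2+\delta)/(1+\delta)}\le h(\eta)\,|x-1|$ (valid because the Radon--Nikodym derivative is bounded by $\eta$) to pass from TV to $\nu_\delta$. You instead work at the level of the \emph{filter distributions of the hidden state $X_0$}, obtain an \emph{almost-sure deterministic} bound $\|\mu_n-\mu_\infty\|_{TV}\le C_1\rho^n$ from filter forgetting, and push this through the emission map. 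Your route yields a stronger intermediate statement (pointwise rather than in expectation) and makes the dependence on $\eta$ more transparent, at the cost of having to set up the filter-stability machinery yourself rather than quoting it. One small correction: in the Hilbert projective metric approach you allude to, the Bayes update is an isometry regardless of (ii), so the contraction rate $\rho$ comes from (i) alone; condition (ii) is needed only in your first step, to produce the constant $C_0(\eta)=\eta-1$ in the log-ratio bound.
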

The proof of this lemma is presented in Section \ref{sec:lemHMMproof}. 
\end{enumerate}


In addition to the above mixing conditions, we require a notion of stability for the likelihood of a process for our finite sample concentration results. To understand this notion we should first define the Hamming distance between two vectors.

\begin{definition}
The Hamming distance between two sequences $x_{1:n}  \in \mathbb{R}^n$ and  $y_{1:n} \in \mathbb{R}^n$ is defined as $$d_n(x_{1:n}, y_{1:n}) \triangleq \sum_{i=1}^n \mathbb{I}_{x_i\neq y_i},$$
where $\mathbb{I}$ denotes the indicator function. 
\end{definition}

This notion enables us to define the notion of $M-$stability.
\begin{definition}
The $M$-stability  coefficient of a finite state m-Markov process $\{X_i\}_{i=-\infty}^{\infty}$, with $X_i \in A$, is defined as
$$ M \triangleq \sup_{n} \sup_{(X_{1:n},X'_{1:n})\in {A^n}^2,~{\rm s.t.}~d_n(X_{1:n},X'_{1:n})\le 1 } |\log(P(X_{1:n})- \log(P(X'_{1:n}))|.$$ 
We will say that $\{X_i\}_{i=-\infty}^{\infty}$ is $M$-stable if its $M-$stability coefficient is finite.
%
\end{definition}

%

\begin{remark}\label{upper_bound_M}
Consider a finite-state $m-$Markov chain $\{X_i\}_{i=-\infty}^{\infty}$. If 
$\rho \triangleq \min_{x_{1:-m} \in A^{m+2}}   P(x_1|x_{-m:0})>0$, then the M-stability coefficient satisfies
$$M\le (m+1)\log\Big(\frac{1}{\rho}\Big).$$
 The proof of this claim is presented in Section \ref{sec:proofremarkMstable}. 

The notion of $M-$stability  will be used to obtain finite-sample concentration results.  This notion can be seen in relation with the vast majority of concentration inequalities, such as Azuma, Hoefding, and McMiarmid that require boundedness conditions.  

\end{remark}

Now using the notions we developed above we state our first main result that confirms the asymptotic Gaussianity of the Kolmogorov complexity of ergodic sequences. 

\begin{theorem}\label{thm:quantized}
Let  $\{X_i\}_{-\infty}^{\infty}$ denote a stationary and ergodic process. We assume that $X_1 \in A$, where $A=\{a_1,..,a_l\}$ with $l < \infty$. Furthermore, we suppose that

\begin{enumerate}

\item[$C_1.$] The Kolmogorov complexity of  all $a_j $s is finite, i.e., $\max_{i \in \{1, \ldots, l \}}K(a_i)< \infty$.


\item [$C_2.$] 
We assume that there are fixed numbers  $K$, $\beta >1, C>1$, and $\delta \in (0,1]$, such that
\begin{itemize}
\item[-]$\alpha(n) \le K n^{-\beta \frac{(2+\delta) (1+\delta)}{\delta^2}}.$
\item[-] ${\nu_{\delta}(n)}^{\frac{1+\delta}{2+\delta}} = O\left( 2^{-Cn \log(l)} \right).$
\end{itemize} 

\end{enumerate}
If we define {\small \begin{equation*} \begin{split} \sigma^2 & \triangleq {\rm var} (\log(P(X_0|X_{-1},\dots,X_{-\infty})))  + 2 \sum_{k} {\rm cov}(\log(P(X_0|X_{-1},\dots,X_{-\infty})),\log(P(X_k|X_{k-1},\dots,X_{-\infty}))), \end{split} \end{equation*}}
then $\sigma^2 < \infty$, and $$\sqrt{n}\left(\frac{K(X_{1:n})}{n}-  H(X_0|X_{-1},\dots,X_{-\infty})\right) \rightarrow_d N(0,\sigma^2),$$
where the notation $\rightarrow_d$ is used for the convergence in distribution.
\end{theorem}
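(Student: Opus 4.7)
My plan is to split the proof into three steps: first replace $K(X_{1:n})$ by the information content $-\log P(X_{1:n})$ via coding theorems, then decompose this information content as a stationary sum $\sum_k Z_k$ plus a negligible remainder controlled by $\nu_\delta$, and finally apply a central limit theorem for stationary $\alpha$-mixing sequences. Throughout, assumption $C_1$ guarantees that a single Turing machine can print any letter of $A$ using a bounded preamble.

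For the first step, the upper bound comes from compression: using arithmetic (or Shannon--Fano--Elias) coding against $P$, one obtains a description of $X_{1:n}$ of length $-\log P(X_{1:n}) + O(1)$; prepending $O(\log n)$ bits for $n$ together with a constant-length program for the computable law gives $K(X_{1:n}) \le -\log P(X_{1:n}) + O(\log n)$. For the matching lower bound I would invoke Barron's inequality $\mathbb{P}\bigl(K(X_{1:n}) < -\log P(X_{1:n}) - c\bigr) \le 2^{-c}$ with $c = 2\log n$ and Borel--Cantelli. The resulting bound $\lvert K(X_{1:n}) + \log P(X_{1:n}) \rvert = O(\log n)$ almost surely is $o(\sqrt{n})$, so Slutsky reduces the statement to a CLT for $-\log P(X_{1:n})/n$.

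For the second step, set $Z_k \triangleq -\log P(X_k \mid X_{k-1}, X_{k-2}, \ldots, X_{-\infty})$ and decompose
\[
-\log P(X_{1:n}) \;=\; \sum_{k=1}^{n} Z_k \;+\; R_n, \qquad R_n \;=\; \sum_{k=1}^{n} \log \frac{P(X_k \mid X_{k-1},\ldots,X_{-\infty})}{P(X_k \mid X_{k-1},\ldots,X_1)}.
\]
By stationarity each term of $R_n$ is distributed like $\log P(X_0 \mid X_{-1:-\infty}) - \log P(X_0 \mid X_{-1:-k+1})$, and Jensen's inequality gives $\mathbb{E}\lvert R_n\rvert \le \sum_{k=0}^{n-1} \nu_\delta(k)^{(1+\delta)/(2+\delta)}$. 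The exponential decay assumed in $C_2$ makes this bounded uniformly in $n$, so $R_n/\sqrt{n} \to 0$ in probability, and the problem reduces to a CLT for $\tfrac{1}{\sqrt{n}}\sum_{k=1}^n (Z_k - H)$.

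The third step invokes Ibragimov's CLT for stationary $\alpha$-mixing sequences. A tower-property computation, using that $p\lvert\log p\rvert^{2+\delta}$ is bounded on $[0,1]$ and $\lvert A\rvert = l < \infty$, yields $\mathbb{E}\lvert Z_0\rvert^{2+\delta} < \infty$; the polynomial rate in $C_2$ is calibrated so that $\sum_n \alpha(n)^{\delta/(2+\delta)} < \infty$ (since $\beta(1+\delta)/\delta > 1$), and the classical covariance inequality $\lvert\mathrm{cov}(Z_0,Z_k)\rvert \lesssim \alpha(k)^{\delta/(2+\delta)}\lVert Z_0\rVert_{2+\delta}^2$ then gives $\sigma^2 < \infty$. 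The main technical obstacle is that $Z_k$ depends on the entire past, so $\{Z_k\}$ is not literally $\alpha$-mixing with respect to the $X$-filtration; I would circumvent this by approximating $Z_k$ with the finite-memory functional $Z_k^{(m)} = -\log P(X_k \mid X_{k-1},\ldots,X_{k-m})$, which is $\mathbb{F}_{k-m}^{k}$-measurable and therefore inherits mixing rate $\alpha(n-m)$, and by controlling the truncation error in $L^{(2+\delta)/(1+\delta)}$ through $\nu_\delta(m)^{(1+\delta)/(2+\delta)}$ while letting $m = m_n$ grow slowly with $n$. The interplay between the mixing rate, the moment exponent, and the memory-truncation scale is exactly what dictates the precise exponent $\beta(2+\delta)(1+\delta)/\delta^2$ appearing in $C_2$.
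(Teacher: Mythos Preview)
Your three-step architecture---sandwich $K(X_{1:n})$ by $-\log P(X_{1:n})$, decompose the latter as a stationary sum plus a $\nu_\delta$-controlled remainder, then invoke a mixing CLT with a growing-memory truncation---is essentially the paper's strategy. The Barron-type lower bound and the decomposition in your Step~2 match the paper's ``upper bound'' section almost verbatim. Two points, however, separate your sketch from a complete proof.

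First, your Step~1 upper bound on $K$ via arithmetic coding presupposes a ``constant-length program for the computable law,'' but the theorem does \emph{not} assume $P$ is computable, and the paper's proof does not use computability. Instead, the paper bounds $K(X_{1:n})$ by a universal type-class code with memory $m_n=\tfrac{1/2-\epsilon}{\log l}\log n$: describe $m_n$, the first $m_n$ symbols, the empirical $(m_n{+}1)$-block counts (cost $l^{m_n+1}\log^* n=o(\sqrt n)$), and the index of $X_{1:n}$ inside its type class. The type-class cardinality is bounded analytically by $\prod_j (Q_j^{m_n})^{-(n-m_n)f_j^{m_n,n}}$, so the resulting upper bound is $K(X_{1:n})\le -\sum_{j>m_n}\log P(X_j\mid X_{j-1:j-m_n})+o(\sqrt n)$, landing directly on the finite-memory sum rather than on $-\log P(X_{1:n})$. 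The encoder never touches $P$; only the \emph{analysis} of the code length does. Your arithmetic-coding shortcut is valid only under an extra computability hypothesis, so as written it is a genuine gap relative to the stated theorem.

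Second, in Step~3 you correctly note that $Z_k$ must be replaced by $Z_k^{(m_n)}$ with $m_n\to\infty$, but this produces a \emph{triangular array}, so Ibragimov's CLT for a fixed stationary sequence does not apply directly. The paper handles this with a Berry--Esseen type bound for $\alpha$-mixing triangular arrays (their Corollary~\ref{thm:speedgeneralise}), whose error is controlled by $\alpha^{Y^{m_n}}(i)\le\alpha(i-m_n)$ and $\mathbb{E}|Z_1^{(m_n)}|^{2+\delta}$ uniformly in $n$; the exponent $\beta(2+\delta)(1+\delta)/\delta^2$ in $C_2$ is calibrated precisely so that this error vanishes with the stated choice of $m_n$. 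You would need to cite or prove such a quantitative triangular-array result. The paper also spends considerable effort establishing $n^{-1}\operatorname{var}\bigl(\sum_k Z_k^{(m_n)}\bigr)\to\sigma^2$, splitting covariances via the auxiliary variable $W_j=\log P(X_j\mid X_{j/2:j-1})$ and using both the $\alpha$-mixing and $\nu_\delta$ decay; this variance convergence is not automatic and is missing from your sketch.
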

 The proof of this theorem is presented in Section \ref{sec:prooffinitecase}. Note that Theorem \ref{thm:quantized} implies Theorem \ref{thm:first_order}. However, this result provides the rate of convergence as well. Both Theorem \ref{thm:first_order} and Theorem \ref{thm:quantized} are  concerned with the asymptotic behavior of the Kolmogorov complexity, and do not provide any information on the finite sample behavior of this quantity. The following corollary simplifies the statement of this theorem for an independent and identically distributed sequence.

\begin{corollary}
Let  $\{X_i\}_{-\infty}^{\infty}$ denote an independent and identically distributed process. We assume that $X_1 \in A$, where $A=\{a_1,..,a_l\}$ with $l < \infty$. Furthermore, we assume that  $\max_{i \in \{1, \ldots, l \}}K(a_i)< \infty$. Then,
$$\sqrt{n}\left(\frac{K(X_{1:n})}{n}-  H(X_0) \right) \rightarrow_d N(0,\sigma^2),$$
where $\sigma^2= {\rm var} (\log(P(X_0)))$. 
\end{corollary}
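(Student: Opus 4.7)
The plan is to derive this corollary as a direct specialization of Theorem~\ref{thm:quantized} to the iid setting. I would proceed in two short steps: first, verify that the hypotheses $C_1$ and $C_2$ of the theorem hold automatically when $\{X_i\}$ is iid; second, simplify the series defining $\sigma^2$ using independence.

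Condition $C_1$ is exactly the hypothesis of the corollary, so nothing needs to be checked. For condition $C_2$, the iid assumption makes $\mathbb{F}_{-\infty}^{j}$ and $\mathbb{F}_{j+n}^{\infty}$ independent for every $j$ and every $n \geq 1$, which forces $\alpha(n)=0$; the polynomial decay bound thus holds trivially for any choice of constants $K$ and $\beta$. For the $\nu_\delta$ bound, independence gives $P(X_0 \mid X_{-1},\dots,X_{-n}) = P(X_0) = P(X_0 \mid X_{-1},\dots,X_{-\infty})$ almost surely, so $\nu_\delta(n)=0$ for every $n \geq 1$ and the required exponential decay is vacuous. Hence Theorem~\ref{thm:quantized} applies directly, giving both $\sigma^2 < \infty$ and the asymptotic normality.

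It only remains to identify the limiting variance with ${\rm var}(\log P(X_0))$. Independence yields $H(X_0 \mid X_{-1},\dots,X_{-\infty}) = H(X_0)$, and for each $k \neq 0$ the random variables $\log P(X_0 \mid X_{-1},\dots,X_{-\infty}) = \log P(X_0)$ and $\log P(X_k \mid X_{k-1},\dots,X_{-\infty}) = \log P(X_k)$ are functions of distinct independent coordinates, so every covariance term in the series defining $\sigma^2$ vanishes. The expression therefore collapses to $\sigma^2 = {\rm var}(\log P(X_0))$, which is finite because $X_0$ takes only finitely many values (restricting, without loss of generality, to the atoms of positive probability). There is no genuine obstacle in this argument; the corollary is a bookkeeping specialization of the main theorem, and the only item requiring mild care is the convention that the $k=0$ index is not included in the covariance sum, so that the variance is not double-counted.
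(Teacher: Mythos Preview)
Your proposal is correct and matches the paper's intent: the corollary is stated immediately after Theorem~\ref{thm:quantized} with no separate proof, so it is meant to be read exactly as you do, namely as the iid specialization in which $\alpha(n)=0$, $\nu_\delta(n)=0$, and the covariance series collapses to the single variance term. There is nothing to add.
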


Our next goal is to derive probabilistic upper bounds on the discrepancy of the Kolmogorov complexity and Shannon entropy in finite sample sizes. Our next theorem shows that such bounds can be obtained with slightly stronger mixing conditions than those in Theorem \ref{thm:quantized}. For an integer number $n$ define
\begin{eqnarray}
 \log^*(n) = \left\{ \begin{array}{ll}
0 & n \leq 1 \\
 1+ \log^*(\log(n))  & n>1.
\end{array}
\right.
\end{eqnarray}

\begin{theorem}\label{thm:expo non-markov}
Let  $\{X_i\}_{-\infty}^{\infty}$ denote a stationary $m$-Markov process. We assume that $X_1 \in A$, where $A=\{a_1,..,a_l\}$ with $l < \infty$. Furthermore, we assume that

\begin{enumerate}

\item The Kolmogorov complexity of  all $a_j $s is finite, i.e., $\max_{i \in \{1, \ldots, l \}}K(a_i)< \infty$.


\item The $M$-stability coefficient of the process, M, is finite.
\item   The $\phi$-mixing coefficients of the process satisfy $\Delta \triangleq 1 + 24 \sum_{k=0}^{\infty} \phi(k)< \infty $.

\end{enumerate}

Let $\eta \in (0,0.5)$ be a fixed number. We will have $C'$ a constant the depends only on the universal mahcine, and define $\gamma_n \triangleq  \frac{C_1(n)}{n}+ \frac{m}{n} H(X_{1}|X_{0:-m+1})+ n^{-\frac{1}{2}-\eta}=O( n^{-\frac{1}{2}-\eta})$, where $C_1(n)\triangleq  C' + \log^*(m) + l \max_{j \le l} K(a_j)  + l^{(m+1)} \log ^* n  - m \log^*l $. Moreover have $\gamma'(n)\triangleq C' + \log^*(m) + l \max_{j \le l} K(a_j)  + l^{m+1} \log ^* n  + m \log^*l +m H(X_1)=O(\frac{\log^*(n)}{n})$, $ K_1 = 2M^2 \Delta^2$ and  $K'_1(n)\triangleq 2\Delta^2 [ C' +\log^*(n)+\max_i K(a_i)]^2$. Finally let $\zeta$ be a constant less than or equal to $C' + \max_{i \le l} K(a_i)$

Then  for any $ t>\gamma'(n)$,

\begin{equation}\begin{split}&\label{concentration1} P\Big(|\frac{1}{n}K(X_{1:n})-H (X_1|X_{0:-m+1})|>t \Big) \le 2 e^{-\frac{ n (t-\gamma'(n))^2}{ K'_1(n)}}\sim 2e^{-\frac{nt^2}{2\Delta^2\log^*(n)^2}},\end{split}\end{equation}

Furthermore, for any $t>\gamma_n$ we have
\begin{equation}\begin{split}&\label{concentration2}
\mathbb{P}\Big(\Big|\frac{K(X_{1:n})}{n}  -H (X_1|X_{0:-m+1}) \Big|\ge t  \Big) \le  2 e^{-\frac{n (t-\gamma_n)^2}{ K_1}} + n  \zeta 2^{-n^{\frac{1}{2}-\eta}},
\end{split}\end{equation}
. 

\end{theorem}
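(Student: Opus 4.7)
My strategy has three ingredients: a coding-theoretic sandwich relating $K(X_{1:n})$ to $-\log P(X_{1:n})$, a McDiarmid-type concentration inequality for $\phi$-mixing sequences applied to the log-likelihood, and a combination step that produces bounds (\ref{concentration1}) and (\ref{concentration2}) by making different tradeoffs between the two ingredients.

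For the sandwich, the upper bound is constructive. I would describe $X_{1:n}$ by writing down the alphabet $A$, the Markov order $m$, and the transition kernel of the $m$-Markov process quantized to precision roughly $1/n$ using self-delimiting integer codes (cost $C_1(n)$ bits, with the quantization introducing an average bias bounded by $\tfrac{m}{n} H(X_1\mid X_{0:-m+1})$), and then arithmetic-coding $X_{1:n}$ under the quantized law, which uses at most $-\log P(X_{1:n})+O(1)$ bits. Hence $K(X_{1:n}) \leq -\log P(X_{1:n}) + n\gamma_n$ up to constants. For the lower bound I would invoke Barron's inequality, which is a direct consequence of Kraft's inequality on the prefix-free set of halting programs: for every $s>0$,
$$P\bigl(K(X_{1:n}) < -\log P(X_{1:n}) - s\bigr) \leq 2^{-s}.$$

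Next, using the $m$-Markov property and stationarity,
$$-\log P(X_{1:n}) = -\log P(X_{1:m}) + \sum_{i=m+1}^n -\log P(X_i\mid X_{i-m:i-1}),\qquad \mathbb{E}\bigl[-\log P(X_i\mid X_{i-m:i-1})\bigr] = H(X_1\mid X_{0:-m+1}).$$
By $M$-stability the map $x_{1:n}\mapsto -\log P(x_{1:n})$ has bounded differences with constant $M$ in every coordinate, and I would apply an extension of McDiarmid's inequality valid for $\phi$-mixing processes (of Kontorovich--Rio type, with the factor $\Delta = 1+24\sum_k \phi(k)$ arising from coupling) to get
$$P\bigl(\lvert -\log P(X_{1:n}) - n H(X_1\mid X_{0:-m+1})\rvert \geq nu\bigr) \leq 2\exp\!\Bigl(-\frac{nu^2}{2 M^2 \Delta^2}\Bigr).$$

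To obtain (\ref{concentration2}) I combine this tight concentration of $-\log P$ (yielding the denominator $K_1 = 2 M^2 \Delta^2$) with the lower sandwich instantiated at $s = n^{1/2-\eta}$, which contributes exactly the additive tail $n\zeta 2^{-n^{1/2-\eta}}$; the upper sandwich is absorbed into the deterministic shift $\gamma_n$. For (\ref{concentration1}) I instead apply the $\phi$-mixing McDiarmid inequality directly to the length of the universal code constructed in the upper sandwich, whose per-coordinate bounded-differences constant is $C' + \log^*(n) + \max_i K(a_i)$; this produces the denominator $K'_1(n)$ and the centering $\gamma'(n)$, and no Barron slack is needed since this code length is a deterministic pointwise upper bound on $K(X_{1:n})$.

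The main obstacle is that McDiarmid cannot be applied to $K(X_{1:n})$ directly, since altering a single coordinate may change the Kolmogorov complexity by an unbounded amount; bound (\ref{concentration2}) therefore \emph{requires} concentrating $-\log P$ under the tight constant $M$ and then paying the Barron tail to transfer the estimate to $K$. The delicate bookkeeping is keeping the description overhead and quantization bias small enough to produce the stated $\gamma_n$ and $\gamma'(n)$, while establishing the $\phi$-mixing McDiarmid inequality with precisely the factor $\Delta^2$ — this is where the assumption $\Delta<\infty$ (rather than just $\phi(k)\to 0$) is essential.
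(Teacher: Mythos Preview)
Your treatment of (\ref{concentration2}) is essentially the paper's: the upper sandwich plus Barron's inequality, with the Kontorovich--Ramanan concentration applied to $-\log P(X_{1:n})$ under the $M$-stability bounded-difference constant. That part is fine.

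For (\ref{concentration1}), however, there is a genuine gap, and it stems from a mistaken premise. You assert that McDiarmid cannot be applied to $K(X_{1:n})$ because altering one coordinate may change the Kolmogorov complexity by an unbounded amount. This is false: if $x$ and $x'$ differ only at position $i$, then a program for $x'$ can be obtained from a shortest program for $x$ by additionally specifying $i$ (cost $\leq \log^*(n)+C'$) and the new symbol $x'_i$ (cost $\leq \max_j K(a_j)$), so
\[
|K(x_{1:n})-K(x'_{1:n})| \;\le\; C' + \log^*(n) + \max_j K(a_j).
\]
This is precisely the constant appearing in $K'_1(n)$, and the paper applies the $\phi$-mixing McDiarmid inequality \emph{directly to $K(X_{1:n})$} with this constant. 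The centering is then handled by bounding $|\tfrac{1}{n}\mathbb{E} K(X_{1:n}) - H(X_1\mid X_{0:-m+1})|$: the upper bound comes from the coding construction, and the lower bound from Kraft's inequality together with nonnegativity of the Kullback--Leibler divergence, which gives $\mathbb{E} K(X_{1:n}) \ge H(X_{1:n}) - O(1)$.

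Your proposed alternative---concentrating the \emph{length of the upper-sandwich code}---cannot produce (\ref{concentration1}) as stated. First, that code length has bounded differences $M$, not $C' + \log^*(n) + \max_j K(a_j)$; the latter constant has no obvious meaning for the arithmetic-coding upper bound. Second, and more fundamentally, a pointwise upper bound $K(X_{1:n}) \le U(X_{1:n})$ with $U$ concentrating around $nH$ only controls the upper tail $P(K/n - H > t)$; it says nothing about $P(K/n - H < -t)$ without a separate lower bound on $K$. So ``no Barron slack is needed'' is wrong: you would still need either a Barron-type tail or the expectation lower bound from Kraft, and once you invoke the latter you are forced to concentrate $K$ itself around its mean---which brings you back to needing the Lipschitz property of $K$ that you dismissed.
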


%

Theorem \ref{thm:expo non-markov} can be formulated in the following slightly different way:

\begin{corollary}\label{cor:finitesample}
Let $\{X_i\}_{-\infty}^{\infty}$ be  a  $m-$markov process that satisfies all the conditions of  Theorem \ref{thm:expo non-markov}. Fix $K_1$ to be the value defined in Theorem \ref{thm:expo non-markov}. 
Then, for every $\epsilon>0$, $\exists N$ such that $\forall n \ge N$
 $$\frac{P(\sqrt{n}|\frac{K(X_{1:n})}{n}  -H (X_1|X_{0:-m+1}) |\ge t ) }{ 2 e^{- \frac{t^2}{K_1}}} \le 1+\epsilon$$
\end{corollary}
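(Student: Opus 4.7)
The plan is to obtain this corollary as a direct rescaling of the concentration inequality \eqref{concentration2} in Theorem~\ref{thm:expo non-markov}, followed by a limiting analysis in which $t$ is held fixed while $n \to \infty$. The key quantitative facts that drive everything are that $\gamma_n = O(n^{-1/2-\eta})$ and that the additive error term $n\zeta\, 2^{-n^{1/2-\eta}}$ decays super-polynomially.

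First I would substitute $t \mapsto t/\sqrt{n}$ in \eqref{concentration2}. Provided $t/\sqrt{n} > \gamma_n$, i.e. $t > \sqrt{n}\,\gamma_n$, this yields
\begin{equation*}
\mathbb{P}\!\left(\sqrt{n}\,\Big|\tfrac{K(X_{1:n})}{n}-H(X_1|X_{0:-m+1})\Big|\ge t\right)\;\le\; 2e^{-(t-\sqrt{n}\gamma_n)^2/K_1} + n\zeta\, 2^{-n^{1/2-\eta}}.
\end{equation*}
Since $\gamma_n=O(n^{-1/2-\eta})$, we have $\sqrt{n}\,\gamma_n = O(n^{-\eta}) \to 0$, so for any fixed $t>0$ the side condition $t > \sqrt{n}\,\gamma_n$ holds for all $n$ sufficiently large.

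Next I would compare each of the two terms on the right-hand side with the target $2e^{-t^2/K_1}$. For the main term,
\begin{equation*}
\frac{2e^{-(t-\sqrt{n}\gamma_n)^2/K_1}}{2e^{-t^2/K_1}}=\exp\!\left(\frac{2t\sqrt{n}\,\gamma_n-n\gamma_n^{2}}{K_1}\right)\longrightarrow 1,
\end{equation*}
because the exponent tends to $0$ as $n\to\infty$. For the additive remainder,
\begin{equation*}
\frac{n\zeta\, 2^{-n^{1/2-\eta}}}{2e^{-t^2/K_1}}=\tfrac{\zeta}{2}\,e^{t^{2}/K_1}\, n\, 2^{-n^{1/2-\eta}}\longrightarrow 0,
\end{equation*}
since the prefactor $e^{t^2/K_1}$ is a constant in $n$ and $n\,2^{-n^{1/2-\eta}}\to 0$ (indeed super-polynomially, because $\eta<1/2$).

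Finally, combining the two estimates, the ratio appearing in the statement of the corollary converges to $1$ as $n\to\infty$ for every fixed $t>0$. Given $\epsilon>0$, choose $N$ large enough so that (a) $t>\sqrt{n}\,\gamma_n$, (b) the main-term ratio is within $\epsilon/2$ of $1$, and (c) the remainder ratio is at most $\epsilon/2$; then for all $n\ge N$ the overall ratio is at most $1+\epsilon$. There is no real obstacle in this argument: the only subtlety worth highlighting is that the hypothesis $\gamma_n$ decays strictly faster than $n^{-1/2}$, which is what makes the Gaussian-shaped bound $2e^{-t^2/K_1}$ emerge as the true leading order after rescaling.
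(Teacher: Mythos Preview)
Your proposal is correct and is exactly the ``straightforward application of Theorem~\ref{thm:expo non-markov}'' that the paper invokes as its entire proof. You have simply spelled out the details of the rescaling $t\mapsto t/\sqrt{n}$ and the limiting behavior of the two resulting terms, which is precisely what the paper leaves implicit.
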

\begin{proof}
A straightforward application of Theorem \ref{thm:expo non-markov}.
\end{proof}

%
%
%

\section{Proof}



\subsection{Background on Kolmogorov complexity}
There are two simple results on the Komogorov complexity that we employ in our proofs. We mention these two as simple lemmas that we can refer to later in the proofs of our main results. For the proof of these results a reader may refer to \cite{cover}, Chapter 14 (Example 14.2.7 and Theorem 14.2.4)

\begin{lemma}\label{ex:integer}
Let $n$ denote an integer number. Then we have the following upper-bound on the Kolmogorov complexity of n:
 $$K(n) \le \log^*(n)+c,$$
where
\vspace{-.2cm}
\begin{eqnarray}
 \log^*(x) = \left\{ \begin{array}{ll}
0 & x \leq 1 \\
 1+ \log^*(\log(x))  & x>1,
\end{array}
\right.
\end{eqnarray}
and where $c$ is a constant that depends only on the universal machine.
\end{lemma}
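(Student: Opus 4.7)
The plan is to exhibit, for each positive integer $n$, an explicit prefix-free program $p$ for the reference universal Turing machine $U$ satisfying $U(p)=n$ and $\mathrm{length}(p)\le \log^*(n)+c$, where $c$ depends only on $U$. Once such a program is produced, the defining infimum in Definition \ref{Kolmogorov complexity} immediately gives the bound.

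The construction is a nested self-delimiting encoding of the binary expansion of $n$. Writing $n$ in plain binary uses $\lfloor\log n\rfloor+1$ bits but is not prefix-free, so the decoder cannot tell where the number ends. The standard remedy is to precede the binary expansion by a self-delimiting description of its length, and, recursively, a self-delimiting description of the length of that length, and so on. Concretely, set $\ell_0(n)=n$ and $\ell_{k+1}(n)=\lfloor\log \ell_k(n)\rfloor+1$, and let $s$ be the smallest index with $\ell_s(n)\le 1$. The recursive definition $\log^*(x)=1+\log^*(\log x)$ gives exactly $s=\log^*(n)$. The program then consists of a fixed decoding routine followed by the concatenation $b(\ell_s(n))\,b(\ell_{s-1}(n))\cdots b(\ell_0(n))$, where $b(\cdot)$ denotes binary expansion; the routine reads one block at a time, interprets each as the bit-length of the next, and finally emits the lowest block, which is $b(n)$.

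To finish, I would sum the bit lengths of the blocks and the fixed parsing overhead to bound $\mathrm{length}(p)$ by the claimed quantity, folding the size of the fixed routine and any reference-machine translation into the constant $c$ via the invariance theorem. The one step that requires genuine care is verifying that the whole construction is unambiguously parseable; I would handle this by having the decoder start from the innermost one-bit block and read outward, so that no additional separator or end-marker is needed and the nested code is automatically prefix-free. Beyond this bookkeeping, the argument is a direct application of the invariance theorem together with the recursive definition of $\log^*$, so I do not anticipate any substantive obstacle beyond the careful layout of the encoding.
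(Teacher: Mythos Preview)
The paper does not actually prove this lemma; it simply refers the reader to Cover and Thomas, Chapter 14. Your construction---recursively prefixing the binary expansion of $n$ by a self-delimiting description of its length, then of that length's length, and so on---is precisely the one given there, so at the level of strategy you are aligned with the cited source.

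However, your final bookkeeping step has a real gap, and it is tied to an ambiguity (arguably a typo) in the paper's own definition of $\log^*$. With the recursion $\log^*(x)=1+\log^*(\log x)$ as written in the paper, $\log^*(n)$ merely \emph{counts} how many times one must apply $\log$ to bring $n$ down to $1$; in your notation this is the depth $s$. But the program you build has length
\[
\mathrm{length}(p)\;\ge\;|b(\ell_0(n))|+|b(\ell_1(n))|+\cdots \;=\;\ell_1(n)+\ell_2(n)+\cdots \;\approx\;\log n+\log\log n+\cdots,
\]
which is a \emph{sum} of iterated logarithms, not their count. So ``summing the bit lengths of the blocks'' does not yield $s=\log^*(n)$ under the paper's literal recursion; it yields something of order $\log n$. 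Indeed, with the iterated-log-count reading the lemma is false outright: for each $k$ there are $2^{k-1}$ integers in $[2^{k-1},2^k)$, so most of them require at least $k-1$ bits, while the count-version of $\log^*(2^k)$ is only $1+\log^*(k)$.

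The resolution is that in Cover and Thomas (and consistently with the paper's subsequent remark that $\log^*(n)<2\log n+2$ and with how $\log^* n$ is used later in the proofs), $\log^*$ is meant to be the \emph{sum} $\log n+\log\log n+\cdots$, i.e.\ the recursion should read $\log^*(x)=\log x+\log^*(\log x)$. Under that reading your construction goes through and your accounting does give $\mathrm{length}(p)\le \log^*(n)+c$. You should state this explicitly rather than asserting that the block lengths sum to the iterated-log count.
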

\noindent It is straightforward to show that $\forall n\ge 1, \log^*(n) < 2 \log(n)+2$.  Another result that will be used about the Komogorov complexity in our paper is the following:

\begin{lemma}\label{Kraft}
 Let $\{0,1\}^\infty \triangleq \bigcup_{i=1}^{\infty} \{0,1\}^i$. If $C_v  \triangleq \{ x \in \{0,1\}^{\infty} \ | \ K(x) < v \}$, then  $|C_v| \le 2^v$.
\end{lemma}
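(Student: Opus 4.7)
The plan is a direct injection-and-count argument, which is essentially one half of the Kraft-style accounting that underlies source coding bounds.

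First, I would use Definition \ref{Kolmogorov complexity} to assign to each $x \in C_v$ a specific program $p_x \in \{0,1\}^{\infty}$ witnessing $K(x)<v$: since $K(x)$ is defined as an infimum over a subset of $\mathbb{N}$ and is strictly less than $v$, the infimum is attained, so choose $p_x$ to be (say) the lexicographically smallest program of length $K(x)$ with $U(p_x) = x$. By construction, $\mathrm{length}(p_x) = K(x) < v$.

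Second, I would argue that the assignment $x \mapsto p_x$ is injective from $C_v$ into $\bigcup_{i \ge 0} \{0,1\}^i$. This is immediate from determinism of the universal machine $U$: if $p_x = p_{x'}$ then $x = U(p_x) = U(p_{x'}) = x'$. Hence $|C_v|$ is at most the number of distinct binary strings of length strictly less than $v$.

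Third, I would count: the number of binary strings of length less than $v$ is
\[
\sum_{i=0}^{v-1} 2^i = 2^v - 1 \le 2^v.
\]
Combined with the injection, this yields $|C_v| \le 2^v$, which is the claim.

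There is essentially no obstacle in this proof; the only thing to be careful about is that the infimum in the definition of $K(x)$ is actually attained (because it is an infimum of positive integers), which justifies selecting a specific program $p_x$. Everything else is pure counting, and the slack of $2^v - 1$ versus $2^v$ shows the bound stated in the lemma is not even tight.
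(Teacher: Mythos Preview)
Your proof is correct and is precisely the standard counting argument: the paper does not give its own proof but refers the reader to Chapter~14 of Cover's textbook, where exactly this injection-and-count argument appears. Your care about the infimum being attained is appropriate but routine, and otherwise there is nothing to add.
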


\subsection{Background information on mixing sequences}\label{sec:backgroundinformation}

In our proofs we will also use some well-known results on the central limit theorem for the empirical average of weakly dependent sequences. We summarize these results in this section. 



\begin{theorem} \cite{speed2} \label{thm:speed_of_convergenceclt}
Let $\{X_i\}_{i=1}^{\infty}$ denote a stationary process with $\mathbb{E}(X_1)=0$ and $\mathbb{E}(|X_1|^{2+\delta})< \infty$ for some $\delta \in (0,1]$.
Let $n \in \mathbb{N}$ and define 
 $$\sigma_n^2 \triangleq {\rm var}(\sum_{i=1}^n X_i).$$ 
 Suppose that 
 $\frac{\sigma_n^2}{n} \rightarrow \sigma^2$, where $\sigma^2 \in (0,\infty)$. Let  $F_n$ denote the cdf (cumulative distribution function) of $\frac{\sum_{i=1}^n X_i}{\sigma_n}$. If the $\alpha$-mixing coefficients satisfy
 $$\alpha \triangleq \sum_{i=1}^{\infty} (\alpha(i))^{\frac{\delta}{2+\delta}} < \infty,$$
and there exist $k>1$ and $m$ such that the following conditions hold:
\begin{itemize}
\item[(C.1)] $k  \ge \frac{\log(n)}{2 \log(16)}$,
\item[(C.2)] $k^{\frac{3}{2}} 4^k (\alpha(m+1))^{\frac{1}{2+\delta}} \le 1$,
\item[(C.3)]   $2km+1 < n $,
\end{itemize}
then  there is a constant $C$, that does not depend on the process or n, such that for any $n$ that satisfies $\frac{\sigma_n^2}{n} \ge \frac{1}{4} \sigma^2$,  we have 
\vspace{-.3cm}
\begin{equation*}\begin{split} \sup_{t} |F_n(t)-\Phi(t)| & \le C[x^{2+\delta} \frac{(m+1)^{\delta+1}}{B_n^{\delta}} + x^3 \frac{(m+1)^2}{B_n}  + x^2((m+1)^{\frac{1}{2}}+ \alpha^{\frac{1}{2}})\frac{m+1}{B_n}\\& + x^2 (1+\alpha)^{\frac{1}{2}} B_n (\alpha(m+1))^{\frac{\delta}{2(2+\delta)}}  + x((m+1)^{\frac{1}{2}} + \alpha^{\frac{1}{2}}) (\alpha(m+1))^{\frac{\delta}{2+\delta}}], \end{split}\end{equation*}
where $x \triangleq \frac{2\mathbb{E}(|X_1|^{2+\delta})^{\frac{1}{2+\delta}}}{\sigma}$ and 
 $B_n \triangleq \frac{ 2\sigma_n}{\sigma}$. 
\end{theorem}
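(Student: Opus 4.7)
The plan is to use Bernstein's blocking technique combined with a Berry--Esseen estimate for independent sums. First I would partition $\{1,\dots,n\}$ into an alternating sequence of ``big'' blocks $U_1,\dots,U_k$ of size $p$ and ``small'' blocks $V_1,\dots,V_k$ of size $m$, with $k(p+m)\approx n$, and set $\xi_j=\sum_{i\in U_j}X_i$, $\eta_j=\sum_{i\in V_j}X_i$. The small blocks serve only as ``buffers'': their role is to create a gap of length $m$ between consecutive big blocks so that $\alpha$-mixing forces the $\xi_j$'s to be close to independent. The (C.1)--(C.3) conditions in the statement are exactly what is needed to match $n$, $k$, $m$ so that every error term below is controlled.

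Second, I would bound $\sup_t|F_n(t)-\Phi(t)|$ via the triangle inequality through three intermediate cdfs. (i)~From $S_n/\sigma_n$ to $(\sum_j\xi_j)/\sigma_n$: the discarded small blocks contribute at most $k\cdot\mathbb{E}|\eta_1|^{2+\delta}$ in a $(2+\delta)$-th moment sense, which by a Rosenthal-type inequality for $\alpha$-mixing sequences is $O(k m^{1+\delta})$; using Markov's inequality this translates into a cdf error of the form $x^{2+\delta}(m+1)^{1+\delta}/B_n^\delta$. (ii)~From $(\sum_j\xi_j)/\sigma_n$ to $(\sum_j\tilde\xi_j)/\sigma_n$, where $\tilde\xi_j$ are independent with the same marginals as $\xi_j$: this uses Berbee's (or Bradley's) coupling lemma, which guarantees that the coupling can be done with $\mathbb{P}(\xi_j\neq\tilde\xi_j)\lesssim\alpha(m+1)$. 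Controlling the resulting cdf gap in terms of $(2+\delta)$-th moments gives the $x^2 B_n(\alpha(m+1))^{\delta/(2(2+\delta))}$ term. (iii)~From $(\sum_j\tilde\xi_j)/\sigma_n$ to $\Phi(t)$: this is standard Berry--Esseen for independent summands, producing a Lyapunov ratio $\sum_j\mathbb{E}|\tilde\xi_j|^{2+\delta}/\sigma_n^{2+\delta}$; once again the Rosenthal-type inequality for mixing sequences lets us bound each $\mathbb{E}|\tilde\xi_j|^{2+\delta}$ by a constant times $p^{1+\delta}\mathbb{E}|X_1|^{2+\delta}$, which after summation gives the $x^3 (m+1)^2/B_n$ contribution (the exponents on $p$ and $m$ are tied through $k p+km\approx n$).

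The main technical hurdle is the Rosenthal-type moment inequality for block sums of an $\alpha$-mixing sequence: one has to show $\mathbb{E}|\sum_{i=1}^p X_i|^{2+\delta}\leq C p^{1+\delta}\mathbb{E}|X_1|^{2+\delta}$ with a constant $C$ depending only on $\alpha$, not on $p$. The standard route is to expand the $(2+\delta)$-th power, group indices, and apply the covariance inequality $|\mathrm{cov}(f,g)|\leq 8\,\alpha(k)^{\delta/(2+\delta)}\|f\|_{2+\delta}\|g\|_{2+\delta}$ for functions measurable with respect to $\sigma$-fields separated by $k$ time steps; the finiteness of $\alpha=\sum_i\alpha(i)^{\delta/(2+\delta)}$ assumed in the theorem is exactly what makes this argument give a uniform bound. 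A secondary nuisance is showing that the hypothesis $\sigma_n^2/n\to\sigma^2$ together with $\sigma_n^2/n\geq\sigma^2/4$ lets one compare $\sigma_n$ to $\sqrt{n}\sigma$ without introducing new constants in the bound; this amounts to verifying that $B_n=2\sigma_n/\sigma$ is the right normalization in every error term.

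Finally, with all three cdf errors in hand, summing them gives the explicit bound in the statement; the choice of $p$ and $m$ (constrained by (C.1)--(C.3)) is what optimizes this sum and justifies the explicit appearance of the factors $x$, $B_n$, $m$, and $\alpha(m+1)$. No further ingredient beyond Berbee's coupling, the mixing covariance inequality, the Rosenthal-type bound, and classical Berry--Esseen is required.
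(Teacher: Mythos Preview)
The paper does not prove this theorem. It is quoted from reference \cite{speed2} as a background tool in Section~\ref{sec:backgroundinformation} and is used as a black box to derive Corollary~\ref{thm:speedgeneralise}; no argument for it appears anywhere in the paper. So there is no ``paper's own proof'' to compare your proposal against.

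That said, your outline is the standard route to Berry--Esseen bounds for $\alpha$-mixing stationary sequences and is essentially what one finds in the literature behind such citations: Bernstein big/small blocks, a moment bound for block sums via the covariance inequality $|\mathrm{cov}(f,g)|\lesssim \alpha(k)^{\delta/(2+\delta)}\|f\|_{2+\delta}\|g\|_{2+\delta}$, a decoupling step to replace the big-block sums by independent copies, and then classical Berry--Esseen. One caution: Berbee's coupling is tailored to $\beta$-mixing, not $\alpha$-mixing; for $\alpha$-mixing the decoupling step is usually done either through Bradley's coupling or, more commonly in this type of rate result, by working directly with characteristic functions and Esseen's smoothing inequality, bounding $|\mathbb{E}\exp(it\sum_j\xi_j)-\prod_j\mathbb{E}\exp(it\xi_j)|$ via the mixing coefficients. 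The specific shape of conditions (C.1)--(C.3), in particular the factor $k^{3/2}4^k$ in (C.2), points to a characteristic-function argument with an iterated telescoping bound rather than a one-shot coupling. Your step~(ii) would need to be reworked along those lines to recover exactly the five terms in the stated bound.
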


In the proof of Theorem \ref{thm:quantized} we will approximate the Kolmogorov complexity using  triangular arrays. We would like to show that  the distribution of $S_{n,n}$, the sum of the first $n$ elements of the $n$-th row of a triangular array,  converges to a normal distribution. To obtain that we will  use the following corollary of Theorem \ref{thm:speed_of_convergenceclt}.

\begin{corollary}\label{thm:speedgeneralise}
Let $\{X_{i}^k\}_{i,k=1}^\infty$  be a double-index process. Furthermore, let $\alpha^k(n)$ denote the $\alpha$-mixing coefficients of $\{X_i^k\}_{i=1}^\infty$. Assume  that $\{X_{i}^k\}_{i=1}^{\infty}$ is a stationary process, with $\mathbb{E}(X_1^k)=0$. Suppose there exists a value of $\delta \in (0,1]$ such  that $\forall \zeta>0, \mathbb{E}(|X^k_1|^{2+\delta})=o(k^{\zeta})$, and that $\mathbb{E}(|X^k_1|^{2+\delta})<\infty$, for all $k$. Let $n \in \mathbb{N}$ and define 
 $\sigma_n^2 \triangleq {\rm var}(\sum_{i=1}^n X^n_i).$
 Suppose that 
 $\frac{\sigma_n^2}{n} \rightarrow \sigma^2$, where $\sigma^2 \in (0,\infty)$. Let  $F_n$ denote the cdf (cumulative distribution function) of $\frac{\sum_{i=1}^n X_i^n}{\sigma_n}$. Suppose that there exist $\epsilon>0$ and $\beta>1$ such that
\begin{itemize} \item $ \epsilon < \min(\frac{\delta}{(\beta+1)(\delta+1)}, 1-\delta \frac{\beta-1}{\beta+1})$ ,
\item
$\forall (n,j),~ \alpha^j(n) \le \min( C' (n-j^{\epsilon})^{-\beta \frac{(2+ \delta)(1+\delta)}{\delta^2}},2), $ where $C'$ is a fixed number.
\end{itemize}
  Then,
\begin{equation*}\begin{split} \sup_{t} |F_n(t)-\Phi(t)| =O_n[\mathbb{E}(|X^n_1|^{2+\delta})n^{-\frac{\delta (\beta-1)}{2(\beta+1)}}] =o_n(n^{-\eta}), \mbox{ for all } \eta< \frac{\delta (\beta-1)}{2(\beta+1)}\end{split}\end{equation*}

\end{corollary}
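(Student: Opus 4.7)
The plan is to apply Theorem \ref{thm:speed_of_convergenceclt} to the $n$-th row $\{X_i^n\}_{i \ge 1}$ of the triangular array for each fixed $n$, with the auxiliary parameters $k = k(n)$ and $m = m(n)$ chosen as functions of $n$, and then control the resulting five-term Berry--Esseen bound uniformly in $n$. The inputs for Theorem \ref{thm:speed_of_convergenceclt} are in hand: stationarity and zero mean are assumed, the $(2+\delta)$-th moment is finite (and subpolynomial in $n$), and $\sigma_n^2/n \to \sigma^2 > 0$. The summability $\sum_i \alpha^n(i)^{\delta/(2+\delta)} < \infty$ is automatic because the exponent $A \triangleq \beta(2+\delta)(1+\delta)/\delta^2$ in the mixing hypothesis satisfies $A\delta/(2+\delta) = \beta(1+\delta)/\delta > 1$; the truncation at $2$ on the first $\lceil n^\epsilon\rceil$ lags contributes at most $O(n^\epsilon)$ to this sum, which is harmless because the sum enters the bound only through a $(1+\alpha)^{1/2}$ factor.

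Next I pick $k(n) = \lceil \log n/(2\log 16) \rceil$ (the smallest value satisfying (C.1)) and $m(n) = \lceil c\, n^{\theta} \rceil$, where $\theta$ is chosen in the nonempty interval $(\epsilon, \delta/((\delta+1)(\beta+1)))$ arbitrarily close to the right endpoint. Condition (C.3) is immediate since $2k(n) m(n) = O(n^{\theta} \log n) = o(n)$. For (C.2), the choice $\theta > \epsilon$ guarantees $m(n) - n^\epsilon \gtrsim n^{\theta}$ for $n$ large, so the mixing hypothesis gives $\alpha^n(m(n)+1) \lesssim n^{-\theta A}$; since $k(n)^{3/2} 4^{k(n)}$ is only polynomial in $n$ while $\theta A > 0$ is of order one, the product in (C.2) tends to $0$ for $n$ large.

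With these choices, substitute into the bound of Theorem \ref{thm:speed_of_convergenceclt}, using $x \asymp \mathbb{E}(|X_1^n|^{2+\delta})^{1/(2+\delta)}$ and $B_n \asymp \sqrt{n}$. The leading term is the moment-driven Term~1, of order $\mathbb{E}(|X_1^n|^{2+\delta})\, n^{\theta(\delta+1)-\delta/2}$; sending $\theta \uparrow \delta/((\delta+1)(\beta+1))$ makes the exponent approach $-\delta(\beta-1)/(2(\beta+1))$, which is exactly the advertised rate. The second constraint on $\epsilon$, namely $\epsilon < 1 - \delta(\beta-1)/(\beta+1)$, is what is needed to handle Terms~2 and~3, which contain $(m+1)^2/B_n$ and $(m+1)^{3/2}/B_n$; the subpolynomial growth $\mathbb{E}(|X_1^n|^{2+\delta}) = o(n^\zeta)$ for every $\zeta>0$ absorbs the powers $x^3$ and $x^2$ appearing there. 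Terms~4 and~5 contain $\alpha^n(m+1)^{\delta/(2(2+\delta))}$ and $\alpha^n(m+1)^{\delta/(2+\delta)}$; the very large mixing exponent $A$ makes these polynomially small in $n$, giving decay at least as fast as Term~1.

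The main obstacle is the delicate simultaneous control of Terms~1 and~4, which pull in opposite directions as $\theta$ varies (larger $\theta$ worsens Term~1 but improves Term~4). The twin restrictions on $\epsilon$ in the hypothesis are calibrated precisely so that the admissible window for $\theta$ is nonempty and contains a value realizing the target rate; tracking the dependence of the $O$-constants on $n$ (in particular through the possibly growing $\alpha = \sum_i \alpha^n(i)^{\delta/(2+\delta)}$) and verifying that the subpolynomial moment growth absorbs the powers $x^{2+\delta}, x^3, x^2$ in the five terms requires a careful but routine bookkeeping, which finally yields the stated $o_n(n^{-\eta})$ conclusion for every $\eta < \delta(\beta-1)/(2(\beta+1))$.
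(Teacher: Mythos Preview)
Your approach is essentially the paper's: apply Theorem~\ref{thm:speed_of_convergenceclt} row-by-row with $k(n)$ proportional to $\log n$ and $m(n)$ a power of $n$, then identify the first (moment) term of the five-term bound as dominant. The paper is terser---it takes $m_n = n^{\delta/((\beta+1)(\delta+1))}$ \emph{exactly} at the right endpoint of your interval and $k_n=\tfrac14\log n$, and then simply asserts that the dominant term is $x_n^{2+\delta}(m_n+1)^{\delta+1}/B_n^{\delta}$---whereas you spell out the role of each of the two constraints on $\epsilon$ and track the remaining four terms more explicitly. One small point: by working with $\theta$ strictly below the endpoint and ``sending $\theta\uparrow$'' you get the $o_n(n^{-\eta})$ conclusion for every $\eta<\delta(\beta-1)/(2(\beta+1))$, but not quite the sharper first display $O_n[\mathbb{E}(|X_1^n|^{2+\delta})\,n^{-\delta(\beta-1)/(2(\beta+1))}]$; since the hypothesis gives $\epsilon$ \emph{strictly} less than $\delta/((\beta+1)(\delta+1))$, you may (and the paper does) take $\theta$ equal to the endpoint and obtain both conclusions at once.
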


\begin{proof}
We would like to use Theorem \ref{thm:speed_of_convergenceclt}.
Consider the $n^{\rm th}$ sequence $X^n_1, X^n_2, \ldots $. Note that the $\alpha$-mixing coefficient of this sequence $\alpha^n(k)  \leq \min ((k-n^{\epsilon})^{-\beta \frac{(2+ \delta)(1+\delta)}{\delta^2}}, 1)$. Without loss of generality and for notational simplicity, we assume $C'=1$.

 Hence, it is straightforward to see that $\sum_{k=1}^{\infty} \alpha^n(k) \leq n^\epsilon + d$, where $d$ is a fixed number. To derive this inequality we have used the upper-bound $\alpha^n(k) \le 1$, for $k \le n^{\epsilon}$.  Furthermore, for each $n$ we choose $(m,k)$ in Theorem  \ref{thm:speed_of_convergenceclt} in the following way:
\begin{eqnarray}
 m_n = n^{\frac{\delta}{(\beta+1)(\delta+1)}}, \ \ \ \  {\rm and } \ \ \ \  k_n = \frac{1}{4}\log(n). \nonumber
\end{eqnarray}
It is straightforward to show that, for $n$ sufficiently large, Conditions C.1, C.2, C.3, required in Theorem  \ref{thm:speed_of_convergenceclt}, hold. Furthermore, it is straightforward to check that $\sigma_n^2 \ge \frac{n\sigma^2}{4}$ as required in Theorem \ref{thm:speed_of_convergenceclt}. Hence, we obtain
 \begin{equation*}\begin{split} \sup_{t} |F_n(t)-\Phi(t)| & \le C[x_n^{2+\delta} \frac{(m_m+1)^{\delta+1}}{B_n^{\delta}} + x_n^3 \frac{(m_n+1)^2}{B_n} \\& + x_n^2((m_n+1)^{\frac{1}{2}}+ \alpha^{\frac{1}{2}})\frac{m_n+1}{B_n} + x_n^2 (1+\alpha)^{\frac{1}{2}} B_n (\alpha(m_n+1))^{\frac{\delta}{2(2+\delta)}} \\& + x_n((m_n+1)^{\frac{1}{2}} + \alpha^{\frac{1}{2}}) (\alpha(m_n+1))^{\frac{\delta}{2+\delta}}], \end{split}\end{equation*}
where $x_n \triangleq \frac{2 \mathbb{E}(|X^n_1|^{2 +\delta})^{\frac{1}{2+\delta}}}{\sigma}$. It is straightforward to check that the dominant term is $x_n^{2+\delta}\frac{(m_n+1)^{\delta +1}}{B_n^{\delta}}=O_n[\mathbb{E}(|X^n_1|^{2+\delta})n^{-\frac{\delta (\beta-1)}{2(\beta+1)}}]$. Hence, the proof is complete. 
\end{proof}


The following lemma enables us to connect the correlation of two random variables that are respectively $F_0$ and $F_n$ measurable to the mixing coefficients. 

\begin{lemma} \cite{correlation} \label{lem:correlation}
Let the random variables $\xi$, $\eta$ be measurable with respect to $\mathbb{F}_{-\infty}^t$ and $\mathbb{F}_{t+\tau}^{\infty}$ respectively. Suppose that there is  $\delta>0$, such that
$$\mathbb{E}(|\xi|^{2+\delta})<c_1<\infty \ \ \ {\rm and}  \ \ \ \mathbb{E}(|\eta|^{2+\delta})<c_2<\infty.$$ 
Then, $$|\mathbb{E}(\xi \eta)- \mathbb{E}(\xi)\mathbb{E}(\eta)| \le \alpha(\tau)^{1-\frac{2}{2+ \delta}} (4+ 3(c_1^{\beta} c_2^{1-\beta}+c_1^{1-\beta} c_2^{\beta} )),$$
where $\beta \triangleq \frac{1}{2+\delta}$. 
\end{lemma}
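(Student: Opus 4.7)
The plan is to prove this covariance inequality in the classical two-stage manner of Ibragimov and Davydov: first establish the bound for bounded random variables directly from the definition of $\alpha$-mixing, and then extend to the $(2+\delta)$-moment setting via truncation and H\"older's inequality.

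For the bounded step, I would suppose $|\xi|\le M_1$ and $|\eta|\le M_2$ almost surely and show that $|\mathbb{E}(\xi\eta)-\mathbb{E}(\xi)\mathbb{E}(\eta)|\le 4 M_1 M_2\,\alpha(\tau)$. The entry point is the observation that for indicators $\mathbb{I}_A,\mathbb{I}_B$ with $A\in\mathbb{F}_{-\infty}^t$ and $B\in\mathbb{F}_{t+\tau}^\infty$, the definition of $\alpha$-mixing gives $|{\rm cov}(\mathbb{I}_A,\mathbb{I}_B)|\le\alpha(\tau)$. I would then decompose $\xi=\xi^+-\xi^-$ and $\eta=\eta^+-\eta^-$, each a bounded nonnegative function of the corresponding $\sigma$-field, and use the layer-cake representation $\xi^\pm=\int_0^{M_1}\mathbb{I}_{\{\xi^\pm>s\}}\,ds$. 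Pushing the covariance inside the two integrals and applying the indicator bound to each layer produces a factor $M_1 M_2\,\alpha(\tau)$ per sign combination, which yields the constant $4$.

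For the unbounded step, I would truncate at levels $M_1,M_2$ via $\xi_{M_1}=\xi\,\mathbb{I}_{|\xi|\le M_1}$ (and analogously for $\eta$), then use bilinearity of the covariance to write
$${\rm cov}(\xi,\eta)={\rm cov}(\xi_{M_1},\eta_{M_2})+{\rm cov}(\xi_{M_1},\eta-\eta_{M_2})+{\rm cov}(\xi-\xi_{M_1},\eta).$$
The first summand is controlled by the bounded case. The two residual summands I would bound using $|{\rm cov}(X,Y)|\le\mathbb{E}|XY|+\mathbb{E}|X|\,\mathbb{E}|Y|$, followed by H\"older's inequality with exponents $\bigl(2+\delta,(2+\delta)/\delta,2+\delta\bigr)$ applied to $\mathbb{E}[|\xi|\,\mathbb{I}_{\{|\xi|>M_1\}}|\eta|]$. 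Combined with Markov's inequality $P(|\xi|>M_1)\le c_1/M_1^{2+\delta}$, this yields residuals of order $c_1^{1-\beta}c_2^\beta M_1^{-\delta}$ and $c_1^\beta c_2^{1-\beta}M_2^{-\delta}$ respectively, with $\beta=1/(2+\delta)$ as in the statement.

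Finally, I would choose $M_1,M_2\propto\alpha(\tau)^{-1/(2+\delta)}$ times appropriate powers of $c_1,c_2$ so as to balance the bounded term $4M_1M_2\alpha(\tau)$ against the two truncation residuals; the exponent $\alpha(\tau)^{\delta/(2+\delta)}=\alpha(\tau)^{1-2/(2+\delta)}$ emerges from this optimization, and collecting the multiplicative constants produces the claimed bound. The main technical obstacle is not any single step but the bookkeeping required to extract the constants in the exact form stated; in particular, one must carefully track where $\alpha(\tau)\le 1$ is invoked in order to consolidate the bounded-case contribution into the additive $4$ and the truncation contributions into the factor $3(c_1^\beta c_2^{1-\beta}+c_1^{1-\beta}c_2^\beta)$.
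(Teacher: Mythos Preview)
The paper does not give its own proof of this lemma: it is quoted as a background result from the literature (note the citation \cite{correlation} attached to the lemma statement in Section~\ref{sec:backgroundinformation}), so there is nothing in the paper to compare your proposal against. Your outline is the standard Ibragimov--Davydov argument for covariance inequalities under $\alpha$-mixing and is correct in structure; the only work left is the bookkeeping you already flagged, namely tracking the constants through the truncation/optimization step so that they match the specific form $4+3(c_1^{\beta}c_2^{1-\beta}+c_1^{1-\beta}c_2^{\beta})$.
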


 Most concentration inequalities on random processes assume independence. However here we do not want to make such assumptions. 
In the proof of Theorem \ref{thm:expo non-markov} we will use the following result by Kontorovich and Ramanan that  generalizes the martingale method to dependent variables: 
\begin{lemma}\label{lmm: concentration}\cite{martingalemethod}
Suppose that $\Omega$ is a countable space, and let $\{X_i\}_{i=-\infty}^{\infty}$ be a stationary process with $ X_i \in \Omega$. Furthermore, let  $g:\Omega^n \rightarrow \mathbb{R}$ be a 1-Lipschitz function  with respect to the Hamming metric on $\Omega^n$. Define $\Phi^{'}_{i,j} \triangleq \sup_{x_{0:i}, y_{0:i}} \|P(X_{j:n} \in \cdot |X_{0:i}=x_{0:i}) - P(X_{j:n} \in \cdot|X_{0:i}=y_{0:i})\|_{TV}$. Let $H_n$ be an $n \times n$ matrix defined in the following way: 
$$\mbox{~} H_{n,\{i,j\}}= \begin{cases}1 \mbox{, if i=j} \\ \Phi^{'}_{i,j} \mbox{, if i$<j$}\\ 0 \ \ \mbox{\rm otherwise }\end{cases}.$$ 
Then, for all $t>0$ we have
\begin{itemize}
\item $P(g(X_{1:n})- \mathbb{E}(g(X_{1:n})) \ge t) \le  e^{-\frac{t^2}{2n \Delta_n^2}},$
\item $P(g(X_{1:n})- \mathbb{E}(g(X_{1:n})) \le - t) \le  e^{-\frac{t^2}{2n \Delta_n^2}},$
\end{itemize}
where $\Delta_n \triangleq \|H_n\|_{\infty}=\max_{i\le n} (1+ \Phi_{i,i+1}+\dots+\Phi_{i,n})$. 

\end{lemma}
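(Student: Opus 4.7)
The plan is to prove this by the Doob-martingale method: turn $g(X_{1:n}) - \mathbb{E}[g(X_{1:n})]$ into a martingale with respect to the natural filtration of $(X_i)$, bound each martingale difference uniformly by $\Delta_n$, and then apply Azuma--Hoeffding. All the genuine work lives in the per-step bound, which is where the matrix $H_n$ and the mixing coefficients $\Phi^{'}_{i,j}$ enter.

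First I would set up the Doob martingale. Let $\mathcal{G}_i = \sigma(X_{1:i})$ and $M_i = \mathbb{E}[g(X_{1:n}) \mid \mathcal{G}_i]$ for $i = 0, \ldots, n$, so that $(M_i)$ is a martingale with $M_n - M_0 = g(X_{1:n}) - \mathbb{E}[g(X_{1:n})]$. By the tower property,
\[
|M_i - M_{i-1}| \le \sup_{x, x'} \Bigl|\mathbb{E}[g(X_{1:n}) \mid X_{1:i-1}, X_i = x] - \mathbb{E}[g(X_{1:n}) \mid X_{1:i-1}, X_i = x']\Bigr|,
\]
so everything reduces to controlling how much changing the value of $X_i$ (holding $X_{1:i-1}$ fixed) can shift $\mathbb{E}[g \mid X_{1:i}]$.

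The second and main step is to show that this supremum is bounded by the $i$-th row sum of $H_n$, namely $1 + \sum_{j > i} \Phi^{'}_{i,j} \le \Delta_n$. The ``$1$'' comes directly from the $1$-Lipschitz property of $g$ applied at coordinate $i$: replacing $x$ by $x'$ moves $g$ by at most $1$ when the remaining coordinates are coupled to agree. The $\Phi^{'}_{i,j}$ contributions come from constructing a sequential maximal (Strassen-type) coupling of the two conditional laws of $X_{i+1:n}$; by the definition of $\Phi^{'}_{i,j}$ as the supremum total-variation contrast of the joint law of $X_{j:n}$ under a perturbation of $X_{0:i}$, the expected Hamming disagreement at time $j$ is controlled by $\Phi^{'}_{i,j}$, and these add up. Combined with the Lipschitz property of $g$ relative to the Hamming metric, this yields $|M_i - M_{i-1}| \le \Delta_n$ almost surely, uniformly in $i$.

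Once that pointwise bound is established, Azuma--Hoeffding applied to the martingale $(M_i)_{i=0}^n$ with uniform difference bound $\Delta_n$ immediately gives both of the desired tail inequalities with exponent $-t^2/(2 n \Delta_n^2)$. The main obstacle is the coupling argument in Step 2: producing a measurable, sequential coupling of the two conditional future trajectories whose per-coordinate disagreement probabilities telescope into the row sums of $H_n$ rather than into some weaker conditional variant of the mixing coefficients. This telescoping bookkeeping is the technical heart of the Kontorovich--Ramanan argument; once it is in place, the rest of the proof is essentially a routine application of the bounded-differences martingale inequality.
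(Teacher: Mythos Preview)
The paper does not actually prove this lemma: it is quoted directly from \cite{martingalemethod} (Kontorovich--Ramanan), and the only commentary the paper adds is the remark that a one-sided version suffices and that the $\Phi^{'}_{i,j}$ used here are a slight strengthening of the coefficients in the original reference. So there is no ``paper's own proof'' to compare against.

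That said, your sketch is precisely the Kontorovich--Ramanan argument: Doob martingale decomposition with respect to the coordinate filtration, a maximal-coupling/telescoping bound on each martingale difference yielding the row sums of $H_n$, and then Azuma--Hoeffding. Your identification of the coupling step as the technical core is correct, and nothing in the outline is wrong or missing at the level of a proof plan.
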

\begin{remark}
Note that the lemma proposed in \cite{martingalemethod} has a two-sided bound. Here we use a one-sided version. Furthermore note that the conditions on  $\Phi^{'}_{i,j}$ is a bit stronger than the one proposed in \cite{martingalemethod}, but for simplicity we use this condition.
\end{remark}

\subsection{Proof of Lemma \ref{lem:HMMmixing}} \label{sec:lemHMMproof}

According to Proposition 1 in  \cite{HMM1}, there exits $\tau \in (0,1)$ such that
 $$\mathbb{E}(\|P(Y_0\in \cdot |Y_{-1:-m})-P(Y_0 \in \cdot|Y_{-1:-\infty})\|_{TV}) \le C \tau^{m}. $$
Also, note the following three facts that are straightforward to prove:
\begin{enumerate}
\item[(i)] $h(x) \triangleq  \frac{x |\log(x)|^{\frac{2+\delta}{1+\delta}}}{|x-1|}$ is an increasing function of  $ x \in (1,\infty)$.
\item[(ii)]  $h(x) \le 2 $ for $x \in (0,1]$.
\item[(ii)]   $\frac{dP(Y_0|Y_{-1:-\infty})}{dP(Y_0|Y_{-1:-m})} = \frac{\int_{x_0}  dP(x_0|Y_{-1:-\infty}) g(Y_0| x_0)dx_0  }{\int_{x_0}  dP(x_0|Y_{-1:-m}) g(Y_0| x_0)dx_0} \leq \frac{{\rm esssup}_{x_0} g(Y_0 | x_0) }{ {\rm essinf}_{x_0} g(Y_0|x_0) } \leq \eta$. 
\end{enumerate}

By employing these facts we obtain
\begin{equation*}\begin{split}&
\mathbb{E}(|\log(\frac{P(Y_0|Y_{-1:-\infty})}{P(Y_0|Y_{-1:-m})})|^{\frac{2+\delta}{1+\delta}})
\\& =\mathbb{E}(\int |\log(\frac{dP(Y_0|Y_{-1:-\infty})}{dP(Y_0|Y_{-1:-m})})|^{\frac{2+\delta}{1+\delta}} dP(Y_0|Y_{-1:-\infty}))
\\&=\mathbb{E}(\int |\log(\frac{dP(Y_0|Y_{-1:-\infty})}{dP(Y_0|Y_{-1:-m})})|^{\frac{2+\delta}{1+\delta}} \frac{dP(Y_0|Y_{-1:-\infty})}{dP(Y_0|Y_{-1:-m})}d P(Y_0|Y_{-1:-m}))
\\& \le \max(2,\frac{\eta |\log(\eta)|^{\frac{2+\delta}{1+\delta}}}{|\eta-1|}) \mathbb{E}(\int |1- \frac{dP(Y_0|Y_{-1:-\infty})}{dP(Y_0|Y_{-1:-m})}|  d P(Y_0|Y_{-1:-m}))
\\& \le  2  \max(2,\frac{\eta |\log(\eta)|^{\frac{2+\delta}{1+\delta}}}{|\eta-1|}) \mathbb{E}(\|P(Y_0\in \cdot |Y_{-1:-m})-P(Y_0 \in \cdot|Y_{-1:-\infty})\|_{TV})
\\& \le C' \tau^{m}.
\end{split}\end{equation*}
Note that similar ideas have been used in \cite{inequ.kl} .

\subsection{Proof of Remark \ref{upper_bound_M}} \label{sec:proofremarkMstable}
For $n\in \mathbb{N}$, consider the two vectors $x,x' \in A^n$ such that $d_n(x,x')\le 1$. If $d_n(x,x')= 0$, then we can easily see that $|\log(P(x)- \log(P(x'))|=0$. Hence, we assume that $d_n(x,x')= 1$. Suppose that $x_i \neq x'_i$. If $i\in [2,|n-m-1|]$, then
\begin{equation*}\begin{split}&
|\log(P(x_{1:n})- \log(P(x'_{1:n}))| \\&\le |\log(P(x_{1:i-1})- \log(P(x'_{1:i-1}))|+ \sum_{j=i}^{m+1+i} |\log(P(x_j|x_{j-1:j-m}))- \log(P(x'_j|x'_{j-1:j-m}))|\\&~~~ + |\log(P(x_{m+1+i:n}|x_{i+1:m+i}))- \log(P(x'_{m+1+i:n}|x'_{i+1:m+i}))| 
\\& =\sum_{j=i}^{m+1+i} |\log(P(x_j|x_{j-1:j-m}))- \log(P(x'_j|x'_{j-1:j-m}))|  \le -(m+1)\log(\rho).
\end{split}\end{equation*}
This comes from the following facts: (i) For every $j<i$, $x_{1:j}=x'_{1:j}$. Hence, $|\log(P(x_{1:i-1})- \log(P(x'_{1:i-1}))|=0$,  (ii) For every $j>m+1+i$, $x_{j:n}=x'_{j:n}$. Hence, $ |\log(P(x_{m+1+i:n}|x_{i+1:m+i}))- \log(P(x'_{m+1+i:n}|x'_{i+1:m+i}))| =0$.  (iii) Finally, $\forall i \in [|i,m+1+i|]$ $|\log(P(x_j|x_{j-1:j-m}))- \log(P(x'_j|x'_{j-1:j-m}))|\le -\log(\rho)$. The proof for $i \notin [2,|n-m-1|]$ is similar and is hence skipped.

\subsection{Proof of Theorem \ref{thm:quantized}}\label{sec:prooffinitecase}

\subsubsection{Lower bound}
\begin{proof}
Before we discuss the details of the proof, we give a brief overview of the proof strategy to help the reader navigate through the proof more easily. Consider the sequence $X_1, X_2, \ldots, X_n$ with $X_i \in A,$ for all $i$. We assume that $|A| =l$. In this section, we first present a simple program that a universal computer can use to generate this sequence. 

Define $m_n \triangleq \frac{\frac{1}{2} -\epsilon}{\log(l)} \log(n)$, where $\frac{1}{2}-\frac{1}{2C}>\epsilon >0$. Note that $C$ is the same constant as the one used in Condition 3 in the statement of the theorem. The program first tells the universal computer the first $m_n$ bits in the sequence. Then, counts the number of times each $(m_n+1)-$tuple is present in the remaining sequence and reports it.\footnote{For instance, if $m_n=1$, then for the sequence $01001$ the couple $(0,1)$ is present twice, the couple $(1,0)$ once and $(0,0)$ once.} In other words, if we define
\begin{equation}\label{frequence}f_j^{m_n,n} \triangleq \frac{\sum_{k=m_n+1}^{n} \mathbb{I}_{X_{k-m_n:k}=a_j^{m_n}}}{n-m_n},\end{equation}
where $a_j^{m_n}$ is the $j^{\rm th}$ element (in a specific order that is described to the universal computer) of $A^{m_n}$, then the numbers $f_j^{m_n,n}$ are described to the universal computer. Let $\mathbf{f}^{m_n, n}$ denote the vector of all the empirical counts, i.e.,
 \[
 \mathbf{f}^{m_n, n} \triangleq (f_1^{m_n, n}, f_2^{m, n}, \ldots, f_{l^{m_n+1}}^{m_n, n}).
 \] 
 Define an operator $O_f : A^n \rightarrow [0,1]^{l^{m_n+1}}$ that takes $X_1, X_2, \ldots, X_n$ as input and returns $\mathbf{f}^{m_n, n}$ as its output. Then, define the type of a sequence $X_{1:n}$ as the following set:
 \[
 \mathcal{T}_{X_{1:n}} \triangleq \{ Z_{1:n} \ : \   O_f (X_{1:n}) = O_f(Z_{1:n})\  {\rm and} \ Z_{1:m_n} = X_{1:m_n}  \}. 
 \]
Given the information known to the universal computer so far, it has already access to $ \mathcal{T}_{X_{1:n}}$. The only remaining piece of information that the universal computer should have to reconstruct the entire sequence is the index of the sequence $X_1, X_2, \ldots, X_n$ among all the sequences in its type. Let's count the number of bits we have used so far to describe the sequence. 

 Our description requires bits to specify the following quantities: (i)$m_n$, (ii)each $a_j$, (iii) the first $m_n$ bits, (iv)the frequency of observing each possible block of length $(m_n+1)$ in $X_{1:n}$, (v) a systematic way to build all the sequences of length $n$ in $ \mathcal{T}_{X_{1:n}}$, (vi) the index of $X_{1:n}$ in $ \mathcal{T}_{X_{1:n}}$.

\begin{itemize}
\item[(i)] $K(m_n) \le \log^*(m_n)+c$.
\item[(ii)] To describe each $a_j$ at most $l \max_{j \le l} K(a_j)$ are required. 
\item[(iii)] To describe the first $m_n$ symbols we require $m_n( \log^*(l)+c)$.   
\item [(iv)] To describe the frequency of each block we require $l^{m_n+1} \log^*(n)$ bits. The reason is clear, there are $l^{m_n+1}$ different $l$-ary  blocks of length $m_n+1$. Each of them can have at most $n$ elements in them.  
\item[(iv)] So far the universal computer has detected $\mathcal{T}_{X_{1:n}}$.  Now we should describe which element of $\mathcal{T}_{X_{1:n}}$ $X_{1:n}$ is. As the first step we  write a constant size program so that the universal computer realizes what ordering of sequences we are using. The next step is to specify the index of our sequence in this list. To evaluate the number of bits required for describing the index we count the number of elements in $\mathcal{T}_{X_{1:n}} $ . 
\end{itemize}

 Define  $\tilde{P}^{m_n}$ as a new measure on $X_1, X_2, \ldots, X_n$ that has the following properties:
 \begin{enumerate}
 \item  $\tilde{P}^{m_n}$ has the $m_n$-Markov property, i.e.,
 \[
 \tilde{P}^{m_n}(X_1, \ldots, X_n) = \tilde{P}^{m_n}(X_1, X_2, \ldots, X_{m_n}) \prod_{j=m_n+1}^n \tilde{P}^{m_n}(X_j \ | \ X_{j-1}, \ldots, X_{j-m_n}).
 \]
 \item The $m_n+1^{\rm th}$-dimension transition probabilities are the same as those of the original distribution $P$, i.e.,
 \[
 \tilde{P}^{m_n}(X_j \ | \ X_{j-1}, \ldots, X_{j-m_n}) =  {P}(X_j \ | \ X_{j-1}, \ldots, X_{j-m_n}). 
 \]
 \end{enumerate}
 For notational simplicity we consider the notation \begin{equation}\label{Q}Q_j^{m_n} \triangleq \tilde{P}^{m_n}(X_{m_n}=a^{m_n}_{j,m_n}|X_{0}=a^{m_n}_{j,0},\dots, X_{m_n-1}=a^{m_n}_{j,m_n-1}),\end{equation} where $(a_{j,0}^{m_n}, \ldots, a^{m_n}_{j, m_n})$ is the $j^{th}$ element of $A^{m_n+1}$. With this new notation we count the number of elements in $\mathcal{T}_{X_1:n}$. Note that the first $m_n$ symbols are already known. Let's call them
 $x_1, x_2, \ldots, x_{m_n}$. Since,
 \[
 \sum_{X_{1:n} \in\mathcal{T}_{X_{1:n}} }\tilde{P}^{m_n} (X_{m_n+1}, \ldots, X_n \ | \ X_1=x_1, X_2= x_2, \ldots, X_{m_n} = x_{m_n}) \leq 1,
 \]
 we have
 \[
 \sum_{X_{1:n} \in\mathcal{T}_{X_{1:n}} } \prod_{j=m_n+1}^n \tilde{P}^{m_n} (X_{j} |  X_{j-1}, \ldots, X_{j-m_n}) =  |\mathcal{T}_{X_{1:n}}| \prod_{j=1}^{\ell^{m_n+1}} (Q_j^{m_n})^{(n-m_n) f_j^{m_n,n}}
 \]
Hence,
\[
|\mathcal{T}_{X_{1:n}}| < 2^{- (n-m_n) \sum_{j=1}^{l^{(m_n+1)}} f_j^{m_n,n} \log Q_j^{m_n}}. 
\]
This implies that to code the index of an element of $\mathcal{T}_{X_{1:n}}$, we require less than $- (n-m_n) \sum_{j=1}^{l^{m_n+1}} f_j^{m_n,n} \log Q_j^{m_n}$ bits. Combining all the above pieces we obtain the following upper bound for the length of our program: 
\begin{equation} \label{eq:upperKS}
K(X_{1:n}) \le C' + \log^*(m_n) + l \max_{j \le l} K(a_j)  + l^{(m_n+1)} \log ^* n  +m_n \log^*l - (n-m_n) \sum_{j=1}^{l^{m_n+1}} f_j^{m_n,n} \log Q_j^{m_n} \end{equation}
 Our goal is to show that 
$\frac{K(X_1, \ldots, X_n)}{\sqrt{n}} - \sqrt{n}H(X_1 | X_{0:-\infty})$ converges in distribution to a  normal random variable. Note that the first five terms in \eqref{eq:upperKS} are deterministic and when divided by $\sqrt{n}$, they converge to zero.  Hence, we focus on the only remaining term, i.e., $ (n-m_n) \sum_{j=1}^{l^{m_n+1}} f_j^{m,n} \log Q_j^{m_n}$. We have 
\begin{equation}\label{eq:maineq1}
  \begin{split}  &
\sqrt{n}(\frac{1}{n}  (n-m_n) \sum_{j=1}^{l^{m_n+1}} f_j^{m_n,n} \log Q_j^{m_n}+H(X_0|X_{-1},\dots,X_{-\infty}))
\\ &= \sqrt{n}(\frac{n-m_n}{n}  \sum_{j=1}^{l^{m_n+1}} f_j^{m_n,n} \log Q_j^{m_n}+H(X_0|X_{-1},\dots,X_{-m_n})) \\& + \sqrt{n} (H(X_0|X_{-1},\dots,X_{-\infty}) - H(X_0|X_{-1},\dots,X_{-m_n})).  
\end{split}\end{equation}
Our first claim is that 
\begin{equation}\label{eq:convergence_conditional}
\sqrt{n} (H(X_0|X_{-1},\dots,X_{-\infty}) - H(X_0|X_{-1},\dots,X_{-m_n})) \rightarrow 0,
\end{equation}
as $n \rightarrow 0$. To see why this holds, note that
\begin{eqnarray*}
\lefteqn{\sqrt{n} |H(X_0|X_{-1},\dots,X_{-\infty}) - H(X_0|X_{-1},\dots,X_{-m_n})) |} \nonumber \\ 
&\leq& \sqrt{n}  \mathbb{E} | \log P(X_0 \ | \ X_{-1}, \ldots, X_{\infty})  - \log P(X_0 \ | \ X_{-1}, \ldots, X_{-m_n}) | \nonumber \\
&\overset{(a)}{\leq}& \sqrt{n}  (\mathbb{E} | \log P(X_0 \ | \ X_{-1}, \ldots, X_{\infty})  - \log P(X_0 \ | \ X_{-1}, \ldots, X_{-m_n}) |^{\frac{2+\delta}{1+\delta} })^{\frac{1+\delta}{2+\delta}} \nonumber \\
&=& \sqrt{n} (\nu^{\delta}(m_n))^{\frac{1+\delta}{2+\delta}} \rightarrow 0,
\end{eqnarray*}
as $n \rightarrow \infty$. Here we should remind the reader that we have picked $m_n = \frac{\frac{1}{2}-\epsilon}{\log l} \log n$ with $\epsilon$ satisfying  $\frac{1}{2}-\frac{1}{2C}> \epsilon>0$. Note that to obtain (a) we have used Holder inequality and the last step is derived form condition 2 of the theorem regarding the decay of $\nu_\delta$. Combining \eqref{eq:maineq1} and \eqref{eq:convergence_conditional} we conclude that the only remaining step is to show that
 $\sqrt{n}(\frac{n-m_n}{n}  \sum_{j=1}^{l^{m_n+1}} f_j^{m_n,n} \log Q_j^{m_n}+H(X_0|X_{-1},\dots,X_{-m_n}))$ is Gaussian. Toward this goal we first define  
 $$Y_j^{m_n}\triangleq  \log {P}(X_j | X_{j-1}, X_{j-2}, \ldots, X_{j-m_n}).$$
 Note that $\sum_{j=1}^{l^{m_n+1}} f_j^{m_n,n} \log Q_j^{m_n} = \frac{1}{n-m_n}\sum_{j=m_n+1}^n Y_j^{m_n}$. Define
 \[
 S_{n}^{m_n} \triangleq \sum_{i=m_n+1}^n Y_i^{m_n}. 
 \]
To prove the Gaussianity of $S_n^{m_n}$ we employ Corollary \ref{thm:speedgeneralise}. First let us check the conditions of this theorem for $Y_j^{m_n}$: 
 \begin{enumerate}
 \item Boundedness of $\mathbb{E} |Y_j^{m_n}|^{2+\delta} $: First note that
\begin{eqnarray}\label{eq:upperboundingmoment}
\mathbb{E} |Y_j^{m_n}|^{2+\delta} &=& \mathbb{E} \sum_{x_j \in A} {P}(x_j | X_{j-1}, X_{j-2}, \ldots, X_{j-m_n} )| \log {P}(x_j | X_{j-1}, X_{j-2}, \ldots, X_{j-m_n})|^{2+\delta} \nonumber \\
&=&  \mathbb{E} \sum_{x_j \in A} g({P}(x_j | X_{j-1}, X_{j-2}, \ldots, X_{j-m_n} )),
\end{eqnarray}
where the function $g$ is defined in the following way: $g:[0,1] \rightarrow \mathbb{R}$ and $g(t) = t |\log(t)|^{2+\delta}$ for $t \neq 0$, and also $g(0)=0$. It is straightforward to check the following properties of $g$:
\begin{itemize}
\item [(i)] g(t) is continuous at zero.
\item [(ii)] There exists $C_\delta \in (0,1)$ such that $g'(C_\delta) =0$
\item [(iii)]$g'(t) >0$ for $t < C_\delta$ 
\item [(iv)]$g'(t)\leq 0$ for $t> C_\delta$.
\end{itemize} This automatically implies that $g(t) \leq g(C_\delta)$ for all $t \in [0,1]$. Combing this fact with \eqref{eq:upperboundingmoment} implies
\begin{equation}\label{eq:boundedness}
\mathbb{E} |Y_j^{m_n}|^{2+\delta} = \mathbb{E} \sum_{X_j \in A} g({P}(X_j | X_{j-1}, X_{j-2}, \ldots, X_{j-m_n} ) \leq l g(C_\delta). 
\end{equation}
Note that the upper bound does not depend on either $m_n$, $n$ or $j$. 
 \item The mixing coefficient $\alpha$: First let $\alpha^{Y^{m_n}}(i)$ denote the $\alpha$-mixing coefficient for the $Y^{m_n}$ sequence, and let $\alpha(i)$ denote the $\alpha$ mixing coefficient for the original process $X_1, \ldots, X_n$. It is straightforward to check that for every $i> m_n$
 \[
 \alpha^{Y^{m_n}} (i) \leq \alpha(i-m_n) \leq \begin{cases} K (i-m_n)^{-\beta \frac{(2+\delta) (1+\delta)}{\delta^2}} , ~i>m_n\\ 1 \mbox{, otherwise.}\end{cases}
 \] 
 where the last step is due to Condition 2 in the statement of the theorem. As a reminder we have $m_n =O(\log(n))$.
 \item 
For notational simplicity in the rest of the proof we use the notation $\sum_{j=1}^{n-m_n} Y_j^{m_n}$ instead of $\sum_{j=m_n+1}^n Y_j^{m_n}$. Define $\tilde{\sigma}_n^2 =  \rm{var}(Y_{1:n-m_n}^{m_n})$. We will later prove that $\frac{\tilde{\sigma}_n^2}{n} \rightarrow \sigma^2$, where 
{\small \[
\sigma^2 \triangleq {\rm var} (\log(P(X_0|X_{-1},\dots,X_{-\infty})))  + 2 \sum_{k} {\rm cov}(\log(P(X_0|X_{-1},\dots,X_{-\infty})),\log(P(X_k|X_{k-1},\dots,X_{-\infty}))).
\] }
First we can see that $\sigma^2<\infty$. In that goal define  
\[
 W_j \triangleq \log(P(X_j|X_{j/2:j-1})). 
 \]
 We have
{\small \begin{equation*}\begin{split}\label{eq:eqlastref9}&
\lefteqn{\sum_k {\rm cov} (\log(P(X_0|X_{-1},\dots,X_{-\infty})), \log(P(X_k|X_{k-1},\dots,X_{-\infty})))}\\& = \sum_k  {\rm cov} (\log(P(X_0|X_{-1},\dots,X_{-\infty})), W_k)+ \sum_k {\rm cov} (\log(P(X_0|X_{-1},\dots,X_{-\infty})), \log(P(X_k|X_{k-1},\dots,X_{-\infty}))-W_k)\nonumber
 \\&\overset{(a)}{\leq} \sum_k \alpha(\frac{k}{2})^{\frac{\delta}{\delta+2}} (4+ 6l g(C_\delta)) + \sum_k (\nu_\delta(k/2))^{\frac{1+\delta}{2+\delta}} \nonumber 
 \\& \le K  (4+ 6l g(C_\delta)) \sum_k  n^{-\frac{\beta (1+\delta)}{\delta}}  + \sum_k 2^{-C\log(\ell)\frac{k}{2}} < \infty.\nonumber 
\end{split}\end{equation*}}
 To obtain the first term in Inequality (a) we employed Lemma \ref{lem:correlation}. To obtain the second term after Inequality (a) we used Holder's inequality and Definition \ref{def:nudelta}. The last inequality is the result of Condition 2 in the statement of our theorem. 

 \end{enumerate}

 We can now  prove that $\frac{\tilde{\sigma}_n^2}{n} \rightarrow \sigma^2$.  We have 
\begin{eqnarray} \label{eq:combinedvariance}
\frac{{\rm var}(\sum_{j=1}^{n-m_n} Y_j^{m_n})}{n-m_n} &=& {\rm var}(Y_1^{m_n}) + \frac{2}{n} \sum_{i=1}^n \sum_{k = i+1}^n {\rm cov}(Y_i^{m_n},Y_k^{m_n}) \nonumber \\
& =& {\rm var}(Y_1^{m_n}) + \frac{2}{n} \sum_{i=1}^n \sum_{k= 2}^{i} {\rm cov}(Y_1^{m_n},Y_{k}^{m_n}),
 \end{eqnarray}
 where to obtain the last equality we used the stationarity of the process $Y_1^{m_n}, Y_2^{m_n}, \ldots$. 
Our goal is to show that this quantity converges to $\sigma^2$. We simplify the expression of \eqref{eq:combinedvariance} in the following two steps:

\begin{enumerate}
\item Simplifying ${\rm var}(Y_1^{m_n})$:  First note that 
\begin{eqnarray}
\lefteqn{|\mathbb{E}(\log(P(X_1|X_{0:-m_n+1}))) - \mathbb{E}(\log(P(X_1|X_{0:-\infty})))| }  \nonumber \\
\leq&& \!\!\!\!\!\!\!\!\!\! (\mathbb{E} |(\log(P(X_1|X_{0:-m_n+1}))) - \log(P(X_1|X_{0:-\infty}))|^{\frac{2+\delta}{1+\delta}} )^{\frac{1+\delta}{2+\delta}} = (\nu_\delta(m_n))^{\frac{1+\delta}{2+\delta}} \rightarrow 0.
\end{eqnarray}
To obtain the last inequality we used Holder's and to obtain the last convergence we used Condition 2 in the statement of the theorem. Furthermore, note that
\begin{eqnarray}
\lefteqn{|\mathbb{E}(\log^2(P(X_1|X_{0:-m_n}))) - \mathbb{E}(\log^2(P(X_1|X_{0:-\infty})))| }  \nonumber \\
&\leq& (\mathbb{E} |(\log(P(X_1|X_{0:-m_n+1}))) - \log(P(X_1|X_{0:-\infty}))|^{\frac{2+\delta}{1+\delta}} )^{\frac{1+\delta}{2+\delta}} \nonumber \\
&&\times (\mathbb{E} |(\log(P(X_1|X_{0:-m_n+1}))) + \log(P(X_1|X_{0:-\infty}))|^{{2+\delta}} )^{\frac{1}{2+\delta}} \rightarrow 0.
\end{eqnarray}
To prove the last convergence we should note that the first term goes to zero according to Condition 2 in the statement of the theorem. Furthermore, similar to the proof of \eqref{eq:boundedness} we can show that the last expectation is bounded. Hence, it is straightforward to combine the above two equations and obtain 
\begin{equation}\label{eq:varianceconvergence1}
{\rm var}(Y_1^{m_n}) ={\rm var}(\log(P(X_1|X_{0:-m_n+1}))) \rightarrow {\rm var}(\log(P(X_1|X_{0:-\infty}))). 
\end{equation}
\item Our second step is to discuss the covariance terms in \eqref{eq:combinedvariance}. Define
 \begin{eqnarray}\label{eq:differences_sn}
  s_{i,n} &\triangleq& \sum_{k=2}^{i}{\rm cov}(Y_1^{m_n},Y_k^{m_n}), \nonumber \\
  s &\triangleq & \sum_j {\rm cov} (\log(P(X_1|X_{-1:-\infty})),\log(P(X_j|X_{j-1:-\infty}))))\nonumber.
  \end{eqnarray}
  Note that our goal is to bound
  \begin{eqnarray}
  \frac{1}{n} |\sum_{i=1}^n (s_{i,n}-s)| \leq \frac{1}{n} \sum_{i=1}^{2 m_n} |s_{i,n}-s| + \frac{1}{n} \sum_{i=2 m_n+1}^{n} |s_{i,n}-s|. 
  \end{eqnarray}
  We will prove later that $\sup_i |s_{i,n}-s|$ is bounded. Hence, since $m_n/n \rightarrow 0$, we conclude that the first term goes to zero.  Hence, we focus on the second term. Define $Z_j \triangleq \log(P(X_j|X_{j-1:-\infty}))$. Then we have 
  \begin{eqnarray}\label{eq:maintermsn_sdiff}
\frac{1}{n} \sum_{i=2m_n+1}^{n} |s_{i,n}-s| &\leq&  \frac{1}{n}  \sum_{i=2m_n+1}^{n} \sum_{j=2}^{2m_n} |{\rm cov}(Y_1^{m_n},Y_j^{m_n}) - {\rm cov} (Z_1,Z_j) | \nonumber \\
&+& \frac{1}{n}  \sum_{i=2m_n+1}^{n} \sum_{j=2 m_n+1}^{i} |{\rm cov}(Y_1^{m_n},Y_j^{m_n}) - {\rm cov} (Z_1,Z_j) | \nonumber \\
&+& \frac{1}{n}  \sum_{i=2m_n+1}^{n} \sum_{j=i}^{\infty} |{\rm cov} (Z_1,Z_j) |. 
  \end{eqnarray}
  We will show that the each of the three terms on the right converge to zero. Before we proceed further, note that
  \begin{eqnarray}
  \mathbb{E} (|Y_1^{m_n}-Z_1|^{\frac{2+\delta}{1+\delta}}) = \mathbb{E} |\log(P(X_1|X_{0:-m_n+1})) -\log(P(X_1|X_{0:-\infty}))|^{\frac{2+\delta}{1+\delta}} = \nu_\delta(m_n).
\end{eqnarray}
Furthermore, similar to the proof of \eqref{eq:boundedness} it is straightforward to show that  
\begin{eqnarray}
\mathbb{E} (|Z_j|) \leq (\mathbb{E} |Z_j|^{2+\delta})^{\frac{1}{2+\delta}} <M, \nonumber \\
\mathbb{E} |Y_j^{m_n}| \leq (\mathbb{E} |Y_j^{m_n}|^{2+\delta})^{\frac{1}{2+\delta}}< M, 
\end{eqnarray}
where $M^{2+\delta} = l  \sup_{t \in [0,1]} |g_{2+\delta}(t)|$ with $g_{2+\delta}(t) = t |\log(t)|^{2+\delta}$. Now we turn our attention to bounding the terms in \eqref{eq:maintermsn_sdiff}.  
  \begin{eqnarray}\label{eq:boundingcovariances}
 \lefteqn{ |{\rm cov}(Y_1^{m_n},Y_j^{m_n}) - {\rm cov} (Z_1,Z_j) | 
 \leq |{\rm cov}(Y_1^{m_n}-Z_1,Z_j) |+| {\rm cov} (Y_1^{m_n},Z_j-Y_j^{m_n}) |} \nonumber 
 \\&\leq& \mathbb{E} |(Y_1^{m_n}-Z_1)Z_j| + | \mathbb{E}(Y_1^{m_n}-Z_1) \mathbb{E}Z_j| +  \mathbb{E} |Y_1^{m_n}(Z_j-Y_j^{m_n})| + | \mathbb{E}(Y_1^{m_n}) \mathbb{E}(Z_j-Y_j^{m_n})| \nonumber \\
  &\leq& (\mathbb{E} |Y_1^{m_n}-Z_1|^{\frac{2+\delta}{1+\delta}})^{\frac{1+\delta}{2+\delta}}(\mathbb{E} |Z_j|^{2+\delta})^{\frac{1}{2+\delta}}  + (\mathbb{E} |Y_j^{m_n}-Z_j|^{\frac{2+\delta}{1+\delta}})^{\frac{1+\delta}{2+\delta}} \mathbb{E} |Z_j| \nonumber \\
  &&+  (\mathbb{E} |Y_j^{m_n}-Z_j|^{\frac{2+\delta}{1+\delta}})^{\frac{1+\delta}{2+\delta}}(\mathbb{E} |Y_1^{m_n}|^{2+\delta})^{\frac{1}{2+\delta}}  + (\mathbb{E} |Y_j^{m_n}-Z_j|^{\frac{2+\delta}{1+\delta}})^{\frac{1+\delta}{2+\delta}} \mathbb{E} |Y_1^{m_n}| \nonumber \\
&\leq& 4M (\nu_\delta(m_n))^{\frac{1+\delta}{2+\delta}}.  
 \end{eqnarray} 
Hence, we conclude that
\[
 \frac{1}{n}  \sum_{i=2m_n+1}^{n} \sum_{j=1}^{2m_n} |{\rm cov}(Y_1^{m_n},Y_j^{m_n}) - {\rm cov} (Z_1,Z_j) | \leq \frac{n-2m_n}{n} 2 m_n 4M (\nu_\delta(m_n))^{\frac{1+\delta}{2+\delta}} \rightarrow 0,
\]
 as $n \rightarrow \infty$. Note that the last convergence in the theorem is derived from Condition 2 in the statement of the theorem.  Now we find a bound on the second term in \eqref{eq:maintermsn_sdiff}. Define 
 \[
 W_j \triangleq \log(P(X_j|X_{j/2:j-1})). 
 \]
 Then, we have
   \begin{eqnarray}\label{eq:boundingcovariances}
 \lefteqn{ |{\rm cov}(Y_1^{m_n},Y_j^{m_n}) - {\rm cov} (Z_1,Z_j) | 
 \leq |{\rm cov}(Y_1^{m_n}-Z_1,Z_j) |+| {\rm cov} (Y_1^{m_n},Z_j-Y_j^{m_n}) |} \nonumber \\
 &\leq&   |{\rm cov}(Y_1^{m_n}-Z_1,Z_j -W_j) |+ |{\rm cov}(Y_1^{m_n}-Z_1,W_j)| + | {\rm cov} (Y_1^{m_n},Z_j-Y_j^{m_n}) | \nonumber \\
 &\leq&   |{\rm cov}(Y_1^{m_n}-Z_1,Z_j -W_j) |+ |{\rm cov}(Y_1^{m_n}-Z_1,W_j)| + | {\rm cov} (Y_1^{m_n},Z_j-W_j) | \nonumber \\
 &&+  | {\rm cov} (Y_1^{m_n},W_j) | +  | {\rm cov} (Y_1^{m_n},Y_j^{m_n}) |
 \end{eqnarray}
 The strategy that we use to bound the terms $|{\rm cov}(Y_1^{m_n}-Z_1,Z_j -W_j) |$ and $| {\rm cov} (Y_1^{m_n},Z_j-W_j) |$ is the same. Also, the strategy we use to bound $ |{\rm cov}(Y_1^{m_n}-Z_1,W_j)|$ and $| {\rm cov} (Y_1^{m_n},W_j) |$ is the same. Hence, we only derive the bounds for the following three terms: (i) $| {\rm cov} (Y_1^{m_n},Z_j-W_j) |$ , (ii) $ | {\rm cov} (Y_1^{m_n},W_j) |$, and (iii) $ | {\rm cov} (Y_1^{m_n},Y_j^{m_n}) |$. 
\begin{enumerate}
\item $| {\rm cov} (Y_1^{m_n},Z_j-W_j)|$: By using holder inequality we conclude that 
\begin{eqnarray}
|{\rm cov} (Y_1^{m_n},Z_j-W_j)|  &\leq& \mathbb{E} |Y_1^{m_n}(Z_j-W_j)| + \mathbb{E} |Y_1^{m_n}| \mathbb{E}|Z_j-W_j|   \nonumber \\
&\leq& 2(\mathbb{E} |Z_j-W_j|^{\frac{2+\delta}{1+\delta}})^{\frac{1+\delta}{2+\delta}} \mathbb{E} (|Y_1^{m_n}|^{2+\delta})^{\frac{1}{2+\delta}} \nonumber \\
&\leq& 2(\nu_\delta(j/2))^{\frac{1+\delta}{2+\delta}}  M. 
\end{eqnarray}

\item $| {\rm cov} (Y_1^{m_n},W_j) |$: Note that $W_j$ is measurable with respect to $\mathcal{F}_{j/2}^j$ and $Y_1^{m_n}$ is measurable with respect to $\mathcal{F}^{1}_{-\infty}$. Hence, by employing Lemma 7 we conclude that 
\[
| {\rm cov} (Y_1^{m_n},W_j) | \leq \alpha(j/2)^{\frac{\delta}{\delta+2}} (4+2\tilde{M}),
\]
where $\tilde{M} = l g(C_\delta)$. Note that to obtain the last inequality we have used \eqref{eq:boundedness}. 

\item  $ | {\rm cov} (Y_1^{m_n},Y_j^{m_n}) |$: Similar to the argument of the previous case we conclude that 
\[
 | {\rm cov} (Y_1^{m_n},Y_j^{m_n}) |  | \leq \alpha(j-m_n)^{\frac{\delta}{\delta+2}} (4+2\tilde{M}).
\]
\end{enumerate}

Combining \eqref{eq:boundingcovariances} and the above three cases, we conclude that 
\begin{eqnarray}
\lefteqn{\frac{1}{n}  \sum_{i=2m_n+1}^{n} \sum_{j=2 m_n+1}^{i} |{\rm cov}(Y_1^{m_n},Y_j^{m_n}) - {\rm cov} (Z_1,Z_j) |} \nonumber \\
\leq&& \! \!\!\!\!\!\!\!\! \frac{1}{n} \!\!\! \sum_{i=2m_n+1}^{n} \sum_{j=2 m_n+1}^{\infty} 4 (\nu_\delta(j/2))^{\frac{1+\delta}{2+\delta}}  M+ 2 \alpha(j/2)^{\frac{\delta}{\delta+2}} (4+2M) +  \alpha(j-m_n)^{\frac{\delta}{\delta+2}} (4+2M) \nonumber \\
\leq && \! \!\!\!\!\!\!\!\!\!\!\! \sum_{j=2 m_n+1}^{\infty} 4 (\nu_\delta(j/2))^{\frac{1+\delta}{2+\delta}}  M+ 2 \alpha(j/2)^{\frac{\delta}{\delta+2}} (4+2M) +  \alpha(j-m_n)^{\frac{\delta}{\delta+2}} (4+2M) \rightarrow 0,
\end{eqnarray}
as $n \rightarrow \infty$. 
The last term of \eqref{eq:maintermsn_sdiff} can be bounded in exactly similar fashion, i.e., we use the upper bound $|{\rm cov}(Z_1, Z_j) | \leq |{\rm cov}(Z_1, Z_j- W_j)| +   |{\rm cov}(Z_1,W_j)| $, and then employ Lemma 7 and the definition of $\nu_\delta$ to bound the error. Since the proof is similar we skip it. 
\end{enumerate}
Combining all these steps we conclude that 
\begin{equation}\label{eq:covarianceconvergence}
 \frac{1}{n} |\sum_{i=1}^n (s_{i,n}-s)| \rightarrow 0. 
\end{equation}
 Equations \eqref{eq:combinedvariance}, \eqref{eq:varianceconvergence1}, and \eqref{eq:covarianceconvergence} together prove that
 \[
 \frac{{\rm var}(\sum_{j=1}^{n-m_n} Y_j^{m_n})}{n-m_n} \rightarrow \sigma^2. 
 \]

Therefore if $\sigma^2=0$ we have proved that:$$\frac{1}{\sqrt{n-m_n}}\sum_{j=1}^{n-m_n} [Y_j^{m_n}-H(X_{1}|X_{0:-m_n+1})]\xrightarrow{L_2}0. $$ Hence \begin{equation} 
\frac{1}{\sqrt{n}}[ C' + \log^*(m_n) + l \max_{j \le l} K(a_j)  + l^{(m_n+1)} \log ^* n  - m_n \log^*l - (n-m_n) \sum_{j=1}^{l^{m_n+1}} f_j^{m_n,n} \log Q_j^{m_n}]\xrightarrow{d}0. \end{equation}

 Now  if $\sigma^2>0$ we can apply Corollary \ref{thm:speedgeneralise}, with $F_k^{m_n}$ denoting the CDF of $\frac{\sum_{j=m_n+1}^{m_n+k} Y_j^{m_n}- k H(X_1|X_0,\dots,X_{-m_n}) }{\sqrt{{\rm var}(\sum_{j=m_n+1}^{m_n+k} Y_j^{m_n})}}$. By employing the triangle inequality we have
\begin{equation}\label{eq:maineq2}
  \begin{split}  &
\sup_t \left|P \left(\frac{\sqrt{n-m_n}}{\sigma}(\sum_{j=1}^{l^{(m_n+1)}} f_j^{m_n,n} \log Q_j^{m_n}+H(X_0|X_{-1},\dots,X_{-m_n})) \le t \right) -\Phi(t)\right|
\\&   \le \sup_t \Big|\Phi(t)- \Phi( t\frac{  \sqrt{n-m_n}\sigma}{\sqrt{{\rm var}(\sum_{j=m_n+1}^{n} Y_j^{m_n})}}) \Big| +\sup_t |F_{n-m_n}^{m_n}(t) -\Phi(t)|.
\end{split}\end{equation}

According to Corollary \ref{thm:speedgeneralise}, $\sup_t |F_{n-m_n}^{m_n}(t) -\Phi(t)|=o((n-m_n)^{-\frac{\delta(\beta-1)}{4(\beta+1)}})=o(n^{-\frac{\delta(\beta-1)}{4(\beta+1)}}).$
Moreover we have proved that,
$$\frac{\sqrt{n-m_n}\sigma}{\sqrt{{\rm var}(\sum_{j=1}^{n-m_n} Y_j^{m_n})}} \rightarrow 1.$$
By employing the mean value theorem we can then show that: 
\[
 \sup_t \Big|\Phi(t)- \Phi( t\frac{  \sqrt{n-m_n}\sigma}{\sqrt{{\rm var}(\sum_{j=1}^{n-m_n} Y_j^{m_n})}}) \Big|  \rightarrow 0,
\]
as $n \rightarrow \infty$. If we use this in \eqref{eq:maineq2} we conclude that 
$\lim\inf_{n \rightarrow \infty} P(\sqrt{n} (\frac{1}{n} K(X_{1:n} )- H(X_0|X_{-1},\dots,X_{-\infty}))\le t) \ge \Phi(t \sigma)$, which is one side of what we had to prove. 

\end{proof}
\subsubsection{Upper bound}\label{ssec:upperbound}
\begin{proof}
Define $\delta_n \triangleq n^{-\frac{2}{3}}$.

\begin{eqnarray}\label{eq:entropyconnection}
\lefteqn{ P\Big(\frac{K(X_{1:n})}{n} < -\frac{\log(P(X_{1:n}))}{n} - \delta_n  \Big) } \nonumber \\
 & \le&\!\! P\left(\frac{K(X_{1:n})}{n} < -\frac{\log(P(X_{1:n}))}{n} - \delta_n , \frac{K(X_{1:n})}{n} <x \right) + P\left(\frac{K(X_{1:n})}{n}>x\right).  \end{eqnarray}
 Our goal is to show that under a proper choice of $x$, both probabilities on the right converge to zero as $n \rightarrow \infty$. First note that
 \begin{eqnarray}\label{eq:upperkfinitecasefirst}  
 \lefteqn{P\left(\frac{K(X_{1:n})}{n} < -\frac{\log(P(X_{1:n}))}{n} - \delta_n , \frac{K(X_{1:n})}{n} <x \right) } \nonumber \\ & \le& \sum_{i=1}^{nx} \sum_{\substack{v\mbox{ as }K(v)=i, \\ P(v) <2^{-(i+ n \delta_n)}} } P(X_{1:n}=v) \nonumber  \le \sum_{i=1}^{nx} \sum_{\substack{v\mbox{ as }K(v)=i, \nonumber \\ P(v) <2^{-(i+ n \delta_n)}} } 2^{\log P(v)} \nonumber \\& \le&  \sum_{i=1}^{nx} \sum_{\substack{v\mbox{ as }K(v)=i,  \\ P(v) <2^{-(i+ n \delta_n)}} } 2^{-(i+ n \delta_n)}  \le \sum_{i=1}^{nx} 2^i 2^{-(i+ n \delta_n)}  \le nx 2^{-n \delta_n } \rightarrow 0. \end{eqnarray}

Furthermore, if we choose $x= \frac{3}{2} H(X_0|X_{-1},\dots, X_{-\infty})$, we have 
\begin{equation}\label{eq:upperboundfirst3}
 P\left(\frac{K(X_{1:n})}{n}>\frac{3}{2} H(X_0|X_{-1},\dots, X_{-\infty})\right)  \rightarrow 0.
\end{equation}
as $n \rightarrow \infty$. Hence, by combining \eqref{eq:entropyconnection}, \eqref{eq:upperkfinitecasefirst}, and \eqref{eq:upperboundfirst3}, we have
\begin{equation}\label{eq:Kolmoboundfinal}
P\Big(\frac{K(X_{1:n})}{n} < -\frac{\log(P(X_{1:n}))}{n} - \delta_n  \Big)  \rightarrow 0.
\end{equation}
On the other hand, $\forall t$, 
{\small \begin{eqnarray*}  
\lefteqn{ P\Big(\sqrt{n}(\frac{1}{n} K(X_{1:n} )- H(X_0|X_{-1},\dots,X_{-\infty}))\le t \Big) } \nonumber \\  \le&& \!\!\!\!\!\!\!\! P\Big(\frac{K(X_{1:n})}{n} < -\frac{\log(P(X_{1:n}))}{n} - \delta_n  \Big) + P\Big(\sqrt{n}( -\frac{\log(P(X_{1:n}))}{n} - \delta_n -  H(X_0|X_{-1},\dots,X_{-\infty})) \le t \Big). 
\end{eqnarray*} }
Note two main points about our last expression: (i) According to \eqref{eq:Kolmoboundfinal} the first term goes to zero as $n \rightarrow \infty$. (ii) We would like to characterize the limiting distribution of  $( -\frac{\log(P(X_{1:n}))}{\sqrt{n}} - \sqrt{n}\delta_n -  \sqrt{n}H(X_0|X_{-1},\dots,X_{-\infty})) $. We rewrite this expression in the following way:
\begin{eqnarray}
\lefteqn{-\frac{\log(P(X_{1:n}))}{\sqrt{n}} - \sqrt{n}\delta_n -  \sqrt{n}H(X_0|X_{-1},\dots,X_{-\infty})) } 
\nonumber \\
&=&  -\frac{\log(P(X_{1:n}))}{\sqrt{n}} - \sqrt{n}\delta_n + \frac{ \sum_{j=m_n+1}^{n} \log(P(X_j|X_{j-1:j-m_n}))}{\sqrt{n}} \nonumber \\
&&- \frac{ \sum_{j=m_n+1}^{n} \log(P(X_j|X_{j-1:j-m_n}))}{\sqrt{n}}+  \sqrt{n}H(X_0|X_{-1},\dots,X_{-\infty})).
\end{eqnarray}
where $m_n= \frac{\frac{1}{2} -\epsilon}{\log l} \log n$, where $\frac{1}{{2}}-\frac{1}{2C}> \epsilon>0$. Note that if we prove
\begin{equation}\label{eq:almostdonefirsttheoremfinite1}
-\frac{\log(P(X_{1:n}))}{\sqrt{n}} - \sqrt{n}\delta_n + \frac{ \sum_{j=m_n+1}^{n} \log(P(X_j|X_{j-1:j-m_n}))}{\sqrt{n}} \overset{p}{\rightarrow} 0,
\end{equation}
and
\begin{equation}\label{eq:almostdonefirsttheoremfinite2}
- \frac{ \sum_{j=m_n+1}^{n} \log(P(X_j|X_{j-1:j-m_n}))}{\sqrt{n}}+  \sqrt{n}H(X_0|X_{-1},\dots,X_{-\infty}))  \overset{d}{\rightarrow} N(0, \sigma^2),
\end{equation}
then by Slutsky's theorem we conclude that 
\[
 P\Big(\sqrt{n}( -\frac{\log(P(X_{1:n}))}{n} - \delta_n -  H(X_0|X_{-1},\dots,X_{-\infty})) \le t \Big) \rightarrow \Phi(\sigma t). 
\]
Proof of \eqref{eq:almostdonefirsttheoremfinite2} is the same as the proof we presented in the last section. To prove \eqref{eq:almostdonefirsttheoremfinite1} first note that $\sqrt{n} \delta_n \rightarrow 0$. Furthermore, 
\begin{eqnarray}
\lefteqn{\mathbb{E} \left( \left|\frac{\log(P(X_{1:n}))}{\sqrt{n}}-\frac{ \sum_{j=2m_n}^{n} \log(P(X_j|X_{j-1:j-m_n-1}))}{\sqrt{n}} \right|  \right) } \nonumber \\
&\leq & \mathbb{E} \left( \left|\frac{\log(P(X_{1:m_n}))}{\sqrt{n}}+ \frac{ \sum_{j=m_n+1}^{n} \log(P(X_j|X_{1:j}))}{\sqrt{n}}-\frac{ \sum_{j=m_n+1}^{n} \log(P(X_j|X_{j-1:j-m_n}))}{\sqrt{n}} \right|  \right) \nonumber \\
& \le& -\mathbb{E}\left(\frac{\log(P(X_{1:m_n}))}{\sqrt{n}}\right) + \frac{1}{\sqrt{n}} \sum_{j=m_n+1}^{n} \mathbb{E} (|\log(P(X_j|X_{1:j-1})) -\log(P(X_j|X_{j-m_n:j-1})) |) \nonumber \\
& \le& -\mathbb{E}\left(\frac{\log(P(X_{1:m_n}))}{\sqrt{n}}\right) + \frac{1}{\sqrt{n}} \sum_{j=m_n+1}^{n} \mathbb{E} (|\log(P(X_j|X_{1:j-1})) -\log(P(X_j|X_{-\infty:j-1})) |) \nonumber \\
&&+\frac{1}{\sqrt{n}} \sum_{j=m_n+1}^{n} \mathbb{E} (|\log(P(X_j|X_{-\infty:j-1})) -\log(P(X_j|X_{j-m_n:j-1})) |) \nonumber \\
&\leq&  -\mathbb{E}\left(\frac{\log(P(X_{1:m_n}))}{\sqrt{n}}\right)  + \frac{1}{\sqrt{n}}  \sum_{j=m_n+1}^{n} (\nu_\delta(j))^{\frac{1+\delta}{2+\delta}} + \frac{n-m_n}{\sqrt{n}}  \left(\nu_\delta(m_n) \right)^{\frac{1+\delta}{2+\delta}} \rightarrow 0,
\end{eqnarray}
as $n\rightarrow \infty$.  Hence, $\forall t$ $\lim \sup_{n \rightarrow \infty} P(\sqrt{n} (\frac{1}{n} K(X_{1:n} )- H(X_0|X_{-1},\dots,X_{-\infty}))\le t) \le \Phi(t \sigma)$.

\end{proof}

\subsection{Proof of Theorem \ref{thm:expo non-markov}}

Before we go to the details of the proof we will review the main ideas. We are going to use the  upper and lower bounds on the Kolmogorov complexity, derived in the proof  of Theorem \ref{thm:quantized} to get inequality \ref{concentration2}. For each bound we will obtain concentration-inequalities and combine them to obtain a concentration result for the Kolmogorov complexity. We use the concentration inequality presented in Lemma \ref{lmm: concentration}. Note that we use the notations defined in  \eqref{frequence} and \eqref{Q}. Define 
$$ g(X_{1:n}) \triangleq (n-m) \sum_{j=1}^{l^{m+1}} f_j^{m,n} \log Q_j^{m}.$$ 
We would like to use Lemma \ref{lmm: concentration} to show that $g(X_{1:n})$ concentrates. Toward this goal we need to do the following two steps: (i) Calculate an upper bound for $\Delta_n=\|H_n\|_{\infty}$, where $H_n$ is the $n \times n$ matrix with elements $$\forall (i,j), \mbox{~} H_{n,\{i,j\}}= \begin{cases}1 \mbox{, if i=j} \\ \Phi^{'}_{i,j} \mbox{, if i$<j$}\\ 0 \mbox{ otherwise }\end{cases}.$$ (ii)   $g(X_{1:n})$ is a 1-Lipschitz for the Hamming-distance.

To show inequality \ref{concentration1} we would also to use Lemma \ref{lmm: concentration}. Toward this goal we also need  to do the following two steps: (i) Prove that  $X_{1:n}\rightarrow\frac{K(X_{1:n})}{n}$ is a Lipschitz function for the Hamming-distance. (ii) Calculate an upper-bound for $\mathbb{E}(\frac{K(X_{1:n})}{n})- H(X_{1}|X_{0:-m+1})$. With this summary we now discuss the details of the proof.

First, we bound $\Delta_n$. For every $(j,n)$ define
$$A_{j,n}(x_{0:i}) \triangleq \{x_{j:n} \in A^{n-j+1} \mbox{, such that } P(X_{j:n}=x_{j:n} |X_{0:i}=x_{0:i})- P(X_{j:n}=x_{j:n} ) \ge 0\} .$$ Then, 
\begin{equation*}\begin{split} 
&  \sup_{x_{0:i}} \|P(X_{j:n}\in \cdot|X_{0:i}=x_{0:i})- P(X_{j:n}\in \cdot) \|_{TV}
\\& \le \sup_{x_{0:i}} \sum_{x_{j:n} \in A^{n-j}} |P(X_{j:n}=x_{j:n} |X_{0:i}=x_{0:i})- P(X_{j:n}=x_{j:n} ) |
\\&  \le \sup_{x_{0:i}}[ \sum_{x_{j:n} \in A_{n,j}(x_{0:i})} P(X_{j:n}=x_{j:n} |X_{0:i}=x_{0:i})- P(X_{j:n}=x_{j:n} )
\\& \mbox{~~~~~~~~} +   \sum_{x_{j:n} \in  {A^c_{n,j}(x_{0:i})}}P(X_{j:n}=x_{j:n} )-  P(X_{j:n}=x_{j:n} |X_{0:i}=x_{0:i})]
\\& \le \sup_{x_{0:i}}[ P(X_{j:n}\in A_{n,j}(x_{0:i})|X_{0:i}=x_{0:i})- P(X_{j:n}\in A_{n,j}(x_{0:i}))
\\& \mbox{~~~~~~~~} - P(X_{j:n}\in A^c_{n,j}(x_{0:i})|X_{0:i}=x_{0:i})+ P(X_{j:n}\in A_{n,j}^c(x_{0:i}))]
\\& \le 2  \sup_{A \in \mathbb{F}_{-\infty}^{i}, B \in \mathbb{F}_{j}^{\infty}} |P(B|A)-P(A)| \le 2 \phi(j-i).
\end{split}\end{equation*}
Hence, according to the definition of the $\Phi^{'}_{i,j}$ we have
\begin{equation*}\begin{split} \Phi^{'}_{i,j}& =\sup_{x_{0:i}, y_{0:i}} \|P(X_{j:n}\in \cdot|X_{0:i}=x_{0:i})- P(X_{j:n}\in \cdot|X_{0:i}=y_{0:i}) \|_{TV}
\\& \le  2 \sup_{x_{0:i}} \|P(X_{j:n}\in \cdot|X_{0:i}=x_{0:i})- P(X_{j:n}\in \cdot) \|_{TV}
\\& \le 4 \sup_{A \in \mathbb{F}_{-\infty}^{i}, B \in \mathbb{F}_{j}^{\infty}} |P(B|A)-P(A)| \le 4 \phi(j-i).
\end{split}\end{equation*}

And so we have that $\Delta_n \le 1+ 4 \sum_{k=0}^{\infty} \phi(k)< \infty$. Moreover, according to the proof of Theorem  \ref{thm:quantized}, using the notations introduced in \eqref{eq:upperKS}, we have
{\small $$ K(X_{1:n}) \le C' + \log^*(m) + l \max_{j \le l} K(a_j)  + l^{(m+1)} \log ^* n  - m \log^*l - (n-m) \sum_{j=1}^{l^{m+1}} f_j^{m,n} \log Q_j^{m}.$$}

Let $C_1(n) \triangleq C' + \log^*(m) + l \max_{j \le l} K(a_j)  + l^{(m+1)} \log ^* n  - m \log^*l$. Our goal it to find a concentration inequality  for $ (n-m) \sum_{j=1}^{l^{m+1}} f_j^{m,n} \log Q_j^{m}$. Toward this goal, we prove that this function is 1-Lipschitz and then use Lemma \ref{lmm: concentration}. Note that
$$ (n-m) \sum_{j=1}^{l^{m+1}} f_j^{m,n} \log Q_j^{m} =\sum_{j=m+1}^n \sum_{k=1}^{l^{m+1}} I_{(X_{j-m:j}= a_k^{m})} \log(Q_k^{m}),$$
where $a_k^m$ is the $k^{\rm th}$ element of $A^m$. Let $x, x' \in A^n$ denote two vectors that only differ at  the $j^{\rm th}$-coordinate (i.e. $x_{i}=x'_i, ~\forall i\ne j$).  Then, by the $M$-stability assumption of the theorem $|g(x)-g(x')| \le M$ (note that $g$ is the log-likelihood of $X_{m+1:n}$). Hence, $g$ is $M$-Lipschitz for the Hamming metric. Lemma \ref{lmm: concentration} implies that for every $t>0$  
$$ \mbox{~}P\Big((n-m) \sum_{j=1}^{l^{m+1}} f_j^{m,n} \log Q_j^{m}+ (n-m) H(X_1|X_{0:-m+1}) \le - t\Big) \le 2 e^{-\frac{t^2}{ 2n M^2 \Delta^2}},$$
and
$$\mbox{~}P\Big((n-m) \sum_{j=1}^{l^{m+1}} f_j^{m,n} \log Q_j^{m}+ (n-m) H(X_1|X_{0:-m+1}) \ge  t\Big) \le 2 e^{-\frac{t^2}{ 2n M^2 \Delta^2}}.$$
It is straightforward to confirm that for every $t$, if $ t  -\frac{C_1(n)}{n}+  \frac{m}{n} H(X_{1}|X_{0:-m+1}) >0$, then
{\footnotesize \begin{equation} \label{eq:concentrate2}\begin{split}&
P\Big(\frac{ K(X_{1:n})}{n} - H(X_{1}|X_{0:-m+1}) \ge t \Big) 
\\& \le  P\Big(\frac{1}{n} ((n-m)
  \sum_{j=1}^{l^{m+1}} f_j^{m,n} \log Q_j^{m}- [(n-m) H(X_1|X_{0:-m+1}) +m H(X_1|X_{0:-m+1})]) \ge t -\frac{C_1(n)}{n}\Big) 
\\& \le P\Big(\frac{1}{n} ((n-m)
  \sum_{j=1}^{l^{m+1}} f_j^{m,n} \log Q_j^{m}- (n-m) H(X_1|X_{0:-m+1})) \ge t  -\frac{C_1(n)}{n}+  \frac{m}{n} H(X_{1}|X_{0:-m+1}) \Big) 
\\& \le e^{-\frac{n \Big( t  -\frac{C_1(n)}{n}+  \frac{m}{n} H(X_{1}|X_{0:-m+1}) \Big)^2}{ 2 M^2 \Delta^2}}.
\end{split}\end{equation}}



To prove the upper bound, first set $\delta_n= \frac{1}{n^{\frac{1}{2}+\eta}}$. Similar to the proof we presented in Section \ref{ssec:upperbound}, we can prove that
$$P\Big(\frac{K(X_{1:n})}{n} < -\frac{\log(P(X_{1:n}))}{n} - \delta_n  \Big)  \le n \zeta e^{-n^{\frac{1}{2}-\eta}}.$$
 Hence, if $n$ satisfies $ t + \frac{m }{n} H(X_1|X_{0:-m+1})>0$, then
{\footnotesize \begin{equation}\begin{split}& \label{lowerbound}
P\left(\frac{K(X_{1:n})}{n}  -H (X_1|X_{0:-m+1})  \le - t \right)
\\&  \le P\Big( -\frac{\log(P(X_{1:n}))}{n} - \frac{n-m}{n}H (X_1|X_{0:-m+1})  \le  -t + \delta_n + \frac{m}{n}  H(X_1|X_{0:-m+1}) \Big) \\& \mbox{~~~}+ P\Big(\frac{K(X_{1:n})}{n} < -\frac{\log(P(X_{1:n}))}{n} - \delta_n  \Big) 
\\& \le P\Big( -\frac{1}{n} [(n-m)
  \sum_{j=1}^{l^{m+1}} f_j^{m,n} \log Q_j^{m}- (n-m) H(X_1|X_{0:-m+1})]  \le - t + \frac{m }{n} H(X_1|X_{0:-m+1}) \Big)\\& \mbox{~~}  + P\Big(\frac{K(X_{1:n})}{n} < -\frac{\log(P(X_{1:n}))}{n} - \delta_n  \Big) 
\\& \le e^{-\frac{n \left( t - \frac{m }{n} H(X_1|X_{0:-m+1})-\delta_n\right)^2}{ 2 M^2 \Delta^2}} + n \zeta  e^{-n^{\frac{1}{2}-\eta}}.
\end{split}\end{equation}}
To obtain the second inequality we used the fact that $-\log(P(X_{1:n})) =- \log(P(X_{1:m}))- \log(P(X_{m+1:n}|X_{1:m})) \ge - \log(P(X_{m+1:n}|X_{1:m})) $. Finally,
The first term in the last line is similar to \eqref{eq:concentrate2}. Hence, by combining \eqref{eq:concentrate2} and \eqref{lowerbound} we obtain 
$$P \left(\left|\frac{K(X_{1:n})}{n}  -H (X_1|X_{0:-m+1}) \right|\ge t \right) \le 2e^{-\frac{n \left(t  -\frac{C_1(n)}{n}-  \frac{m}{n} H(X_{1}|X_{0:-m+1})-\delta_n \right)^2}{ 2 M^2 \Delta^2}} + n\zeta e^{-n^{\frac{1}{2}-\eta}}.$$

Finally, note that $C_1(n)=O_n(n^{-1} \log^*(n))$, and hence if we define $\gamma_n \triangleq  \frac{C_1(n)}{n}+ \frac{m}{n} H(X_{1}|X_{0:-m+1})+\delta_n$ and $K_1 \triangleq 2 M^2 \Delta^2 $, 
then we have 
$$P\left( \left|\frac{K(X_{1:n})}{n}  -H (X_1|X_{-m:0}) \right|\ge t \right) \le 2 e^{-\frac{n (t -\gamma_n)^2}{ K_1}} + n\zeta 2^{-n^{\frac{1}{2}-\eta}},$$
where $\gamma_n= O(n^{-(\frac{1}{2}-\eta)})$.

We now want to discuss the details of the proof of inequality \ref{concentration1}.

For $n\in \mathbb{N}$, consider the two vectors $x,x' \in {A^n}^2$ such that $d_n(x,x')\le 1$. If $d_n(x,x')= 0$, then we can easily see that $|K(x_{1:n})- K(x'_{1:n})|=0$. Hence, we assume that $d_n(x,x')= 1$. Suppose that $x_i \neq x'_i$.  Then  we can note that if the universal machine knows $x_{1:n}$ to know $x'_{1:n}$ it only need to know i and $x'_i$. Therefore $$K(x'_{1:n})\le   K(x_{1:n})+ C' +\log^*(n)+\max_i K(a_i),$$ where $C'$ is a constant that depends only on the universal machine.

As the previous inequality is symetric in $x,x'$ we obtain that  $x_{1:n}\rightarrow\frac{1}{n}K(x_{1:n})$ is $\frac{ C' +\log^*(n)+\max_i K(a_i)}{n}$-Lipschitz.

Lemma \ref{lmm: concentration} implies that for every $t>0$  
$$P\Big(|\frac{1}{n}K(X_{1:n})-\mathbb{E}(\frac{1}{n}K(X_{1:n}))|>t \Big) \le 2 e^{-\frac{ n t^2}{ 2 (C' +\log^*(n)+\max_i K(a_i))^2 \Delta^2}}.$$

Moreover thanks to Kraft inequality and  the positivity of the Kullback-Leiller divergence we have that $\mathbb{E}(\log(\frac{P(X_{1:n})}{2^{-K(X_{1:n})}}))\ge 0$, hence $H(X_{1:n})\ge \mathbb{E}(K(X_{1:n}))$.

Moreover we can use the upper-bound  on the Kolmogorov complexity obtained in equation \ref{eq:upperKS} to get that for all $m\in \mathbb{N}$
\begin{equation*}
\mathbb{E}(K(X_{1:n})) \le C' + \log^*(m) + l \max_{j \le l} K(a_j)  + l^{m+1} \log ^* n  + m \log^*l + (n-m) H(X_1|X_{0:-m+1}).\end{equation*}

Hence 
\begin{equation*}
|\frac{1}{n}\mathbb{E}(K(X_{1:n}))-H(X_1|X_{0:-m+1})| \le \frac{C' + \log^*(m) + l \max_{j \le l} K(a_j)  + l^{m+1} \log ^* n  + m \log^*l +m H(X_1)}{n}.\end{equation*}

Therefore by defining $\gamma'(n)\triangleq \frac{C' + \log^*(m) + l \max_{j \le l} K(a_j)  + l^{m+1} \log ^* n  + m \log^*l +m H(X_1)}{n}$ we get that $\forall t>\gamma'(n)$

\begin{equation*}
P\Big(|\frac{1}{n}K(X_{1:n})-\mathbb{E}(\frac{1}{n}K(X_{1:n}))|>t \Big) \le 2 e^{-\frac{ n (t-\gamma'(n))^2}{ 2 (C' +\log^*(n)+\max_i K(a_i))^2 \Delta^2}}.\end{equation*}

\subsubsection{Proof of Example \ref{ex:slower}}\label{sec:proofexampleslow}

We first mention the following central-limit theorem for triangular arrays of martingales that will be later used in the proof. 
\begin{theorem}\label{Martingale}\cite{martingaleclt}
Let $(S_{n,i},F_{i}, 1\le i\le k_n, n\ge 1)$ be a  zero-mean, square integrable martingale array with differences $X_{n,i}$, and let $\eta^2$ be an a.s. finite random variable. Suppose that
\begin{equation*}\begin{split}
\forall \epsilon>0,~&\sum_{i\le k_n}\mathbb{E}(X_{n,i}^2 I_{|X_{n,i}|>\epsilon}|F_{i-1})\xrightarrow{P}0
\\& \sum_{i\le k_n}\mathbb{E}(X_{n,i}^2|F_{i-1})\xrightarrow{P}\eta^2.
\end{split}\end{equation*}
Then $S_{n,k_n}\xrightarrow{d}Z$, where the characteristic function $Z$ is $\mathbb{E}(e^{-\frac{1}{2}\eta^2t^2})$.
\end{theorem}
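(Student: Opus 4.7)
The plan is to establish convergence in distribution via the characteristic function method. Let $\phi_n(t) \triangleq \mathbb{E}(e^{itS_{n,k_n}})$. By L\'evy's continuity theorem it suffices to show that, for every $t \in \mathbb{R}$, $\phi_n(t) \to \mathbb{E}(e^{-\eta^2 t^2/2})$, which is precisely the characteristic function of the claimed limit $Z$.

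The workhorse estimate is a conditional Taylor expansion. Writing $e^{iy} = 1 + iy - y^2/2 + \rho(y)$ with $|\rho(y)| \le \min(y^2,|y|^3/6)$, and using the martingale difference property $\mathbb{E}(X_{n,i}\mid F_{i-1}) = 0$, one obtains $\mathbb{E}(e^{itX_{n,i}} \mid F_{i-1}) = 1 - \tfrac{t^2}{2} V_{n,i} + R_{n,i}$, where $V_{n,i} \triangleq \mathbb{E}(X_{n,i}^2 \mid F_{i-1})$ and the remainder is controlled pointwise by $\mathbb{E}(|tX_{n,i}|^2 \wedge |tX_{n,i}|^3/6 \mid F_{i-1})$. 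The Lindeberg-type hypothesis is tailored so that $\sum_{i \le k_n} |R_{n,i}| \overset{P}{\to} 0$: split on $\{|X_{n,i}| \le \epsilon\}$, where the remainder is bounded by $|t|^3 \epsilon V_{n,i}/6$, versus its complement, where it is bounded by $t^2 \mathbb{E}(X_{n,i}^2 I_{|X_{n,i}|>\epsilon} \mid F_{i-1})$; then let $\epsilon \downarrow 0$ after passing to the limsup.

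Combined with the hypothesis $\sum_i V_{n,i} \overset{P}{\to} \eta^2$, a standard product estimate (e.g.\ $|\prod_i (1+a_i) - \exp(\sum_i a_i)| \le e^{\sum_i |a_i|} \sum_i |a_i|^2$ applied to the small complex numbers $a_i = -\tfrac{t^2}{2}V_{n,i} + R_{n,i}$, after first verifying they are uniformly small via the Lindeberg bound) yields the conditional convergence $\prod_{i \le k_n} \mathbb{E}(e^{itX_{n,i}} \mid F_{i-1}) \overset{P}{\to} e^{-\eta^2 t^2/2}$.

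The main obstacle is passing from this conditional statement to unconditional convergence of $\phi_n(t)$. The standard device is a martingale decomposition of $e^{itS_{n,k_n}} - \prod_{i \le k_n} \mathbb{E}(e^{itX_{n,i}} \mid F_{i-1})$ as a sum of $F_i$-measurable differences $D_{n,i}$, each bounded by $2$ in modulus. By orthogonality of martingale differences, $\mathbb{E}|\sum_i D_{n,i}|^2 = \sum_i \mathbb{E}|D_{n,i}|^2$, and careful accounting using the Taylor remainder estimates shows this sum tends to $0$. Since $|e^{itS_{n,k_n}}| = 1$, a bounded convergence argument then upgrades the in-probability conditional limit to the desired convergence $\phi_n(t) \to \mathbb{E}(e^{-\eta^2 t^2/2})$, completing the proof. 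The delicate part is ensuring that the product estimate is valid uniformly even though $\eta^2$ is random: this is handled by truncating $\eta^2$ at a large constant $M$, proving convergence on $\{\eta^2 \le M\}$, and letting $M \uparrow \infty$.
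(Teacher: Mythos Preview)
The paper does not prove this statement: Theorem~\ref{Martingale} is quoted from the literature (the citation \cite{martingaleclt} is the classical martingale central limit theorem, as in Hall--Heyde or McLeish) and is invoked as a black box in the analysis of Example~\ref{ex:slower}. There is therefore no in-paper proof to compare your attempt against.

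That said, a brief comment on your sketch. The first half---Taylor expansion of $e^{itX_{n,i}}$, Lindeberg control of the remainder, and the product-to-exponential estimate yielding $\prod_{i} \mathbb{E}(e^{itX_{n,i}}\mid F_{i-1}) \xrightarrow{P} e^{-\eta^2 t^2/2}$---is the standard route and is correctly outlined. The step where your argument is thinnest is the passage from this conditional product to $\phi_n(t)\to \mathbb{E}(e^{-\eta^2 t^2/2})$. You invoke a ``martingale decomposition'' of $e^{itS_{n,k_n}} - \prod_i \mathbb{E}(e^{itX_{n,i}}\mid F_{i-1})$ into orthogonal $F_i$-differences, but the natural telescoping of this difference produces terms of the form
\[
\Bigl(\prod_{j<i} e^{itX_{n,j}}\Bigr)\bigl(e^{itX_{n,i}} - Z_i\bigr)\Bigl(\prod_{j>i} Z_j\Bigr),\qquad Z_j \triangleq \mathbb{E}\bigl(e^{itX_{n,j}}\mid F_{j-1}\bigr),
\]
and the trailing factor $\prod_{j>i} Z_j$ is \emph{not} $F_{i-1}$-measurable, so these are not martingale differences for $(F_i)$ and orthogonality fails. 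The classical fix is to work instead with $T_n \triangleq \prod_i (1+itX_{n,i})$, which has $\mathbb{E} T_n = 1$ exactly by the martingale difference property, and to show $T_n\, e^{-itS_{n,k_n}} \xrightarrow{P} e^{t^2\eta^2/2}$; uniform integrability (after the truncation of $\eta^2$ you already mention) then closes the argument. Your outline is repairable along these lines but is not complete as written.
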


We review the roadmap of the proof. First we find an upper bound and lower-bound for the complexity of $X_{1:n}$ in terms of the $(\tau_k)_k$ and $(Y_k)_k$. Using this upper and lower bound we will prove that there is a function, $f_n$ such that $ \sqrt{n}(\frac{K(X_{1:n})}{n}-f_n(\{\tau_k\}_{k=-\infty}^{\infty}, \{Y_k\}_{k=-\infty}^{\infty})) \rightarrow 0$ almost surely. This implies that if the central-limit theorem holds, then the asymptotic distribution of  $ \sqrt{n}(H(X_1|X_{0:-\infty})-f_n(\{\tau_k\}_{k=-\infty}^{\infty}, \{Y_k \}_{k=-\infty}^{\infty})) $ would also be Gaussian. We will then prove that this does not happen since there is a $\eta>0$ such as: $ n^{\frac{1}{2}- \eta}(H(X_1|X_{0:-\infty})-f_n(\{\tau_k\}_{k=-\infty}^{\infty}, \{Y_k\}_{k=-\infty}^{\infty})) $ is not bounded in probability.

First to understand the proof we have to notice that the process $\{X_i\}_{i=-\infty}^{\infty}$ is constituted of different segments of random variables that comes from different distributions and those segments have different lengths, for example $X_{1:\tau_1-\theta}|\tau_1,\theta$ comes from a certain distribution and $X_{\tau_1-\theta+1 : \tau_1-\theta+\tau_2}|\tau_1,\theta,\tau_2$ may come from another distribution. Let $\{L_i\}_i$ denote the $i^{\rm th}$ segments, e.g. $L_1 = X_{1:\tau_1-\theta}$. Define $l_1 \triangleq \tau_1-\theta$, which is the length of the first segment, and for every $i>0$ define
$$N_i \triangleq \max \{k : \mbox{ such that } l_1 +\tau_2+ \cdots +\tau_k \le i).$$ 
$N_i$ is maximum number of segments $\{L_k\}_k$, including the first one, that are entirely in $X_{1:i}$. Finally, define $l_{left}(i) \triangleq i- l_1- \sum_k^{N_i} \tau_k,$ which is the number of elements of $X_{1:i}$ that are not in any of the different $L_k$, for $k \le N_i$. 


To describe $X_{1:n}$ we may describe each segment $X_{1:l_1}$, $X_{l_1+1:l_1+ \tau_2}$,\dots,$X_{l_1+ \sum_k^{N_n} \tau_k +1:n }$. It is straightforward to confirm the following two facts: (i) if $Y_i=1 $ then the $i^{\rm th}$ segment can be described by the length of the segment and a constant cost, C,  to indicate to the machine that it should produce an array of 0's. (ii) If $Y_i=0 $ then the $i^{\rm th}$ segment can be described by describing each element in that segment. Since we have  $N_n+1$ segments, it is straightforward to confirm that
\begin{eqnarray}
\lefteqn{K(X_{1:n} \ | \ N_n, l_1, \tau_2, \ldots, \tau_{N_n}, l_{left(n)} , Y_1, \ldots, Y_{N_n})} \nonumber \\
& \le& C(N_n+1) + \min(l_1,n) I_{Y_1=0} + \sum_{i \le N_n} \tau_i  I_{Y_i=0} + l_{left}(n) I_{Y_{N_n+1}=0}. 
\end{eqnarray}
Note that for the full-description of $X_{1:n}$ we should also describe the following to the universal machine: (i) $(l_1,\tau_{1:N_n},l_{left}(n))$, (ii)  $(Y_{1:N_n+1})$. Hence it is straightforward to check the following upper bound for the Kolmogorov complexity of $X_{1:n}$:
\begin{eqnarray}\label{equ:ubexample2}
K(X_{1:n}) \!\!\!\! & \le& \!\!\! (N_n+1) (1+C+ \log^*(n)) + \min(l_1,n) I_{Y_1=0} + \sum_{i \le N_n} \tau_i  I_{Y_i=0} + l_{left}(n) I_{Y_{N_n+1}=0}.  \nonumber\\
\end{eqnarray}

Before we proceed to simplify the above upper bound, let me find a lower bound for the Kolmogorov's complexity of $X_{1:n}$ as well. Define the vector $V_n$ in the following way: take all the segments of $X_{1:n-l_{left}(n)}$ that are coming from $\tilde{Y}$ and concatenate them to obtain the vector $V_n$. Note that if the Universal computer has access to $X_{1:n}$, then it only requires the following information to construct $V_n$: the values of $Y_{1}, \ldots, Y_{N_n}$ and $l_1, \tau_2, \tau_3, \ldots, \tau_{N_n}$. Hence, it is straightforward to show that
\begin{eqnarray}\label{eq:lbexample2}
K(V_n) \leq K(X_{1:n}) + (N_n +1) (1+\log^*n+C). 
\end{eqnarray}
It is intuitively clear that since $V_n$ has iid $Bern (1/2)$ elements its Kolmogorov complexity should be concentrated around its length. Below we prove this intuition:

\begin{lemma}\label{lem:concentration}
Let $l_n$ denote the length of $V_n$. If $\delta_n = n^{-2/3}$, then 
\[
\mathbb{P} (K(V_n) \leq l_n - n \delta_n|l_n ) \rightarrow 0,
\]
as $n \rightarrow \infty$.
\end{lemma}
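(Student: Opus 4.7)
The key observation is that, conditional on the segment-lengths $(\tau_i)$ and on the ``which-kind'' indicators $(Y_i)$, the vector $V_n$ is obtained by concatenating subsequences of the auxiliary process $\tilde{Y}$, which is itself iid $\mathrm{Bern}(1/2)$ and independent of $(\tau_i,Y_i,\theta)$. Since $l_n$ is a deterministic function of $(\tau_i,Y_i,\theta)$ and of $n$, the conditional law of $V_n$ given $l_n$ is therefore uniform on $\{0,1\}^{l_n}$. This reduces the claim to the standard coding argument that a uniform random binary string cannot have unusually small Kolmogorov complexity.

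Concretely, the plan is to write
\[
\mathbb{P}(K(V_n)\le l_n - n\delta_n \mid l_n=L) \;=\; \sum_{v\in\{0,1\}^L:\,K(v)\le L-n\delta_n} 2^{-L},
\]
and then invoke Lemma~\ref{Kraft}, which gives $|\{v: K(v)\le L-n\delta_n\}|\le 2^{L-n\delta_n+1}$. Substituting yields
\[
\mathbb{P}(K(V_n)\le l_n - n\delta_n \mid l_n) \;\le\; 2\cdot 2^{-n\delta_n} \;=\; 2\cdot 2^{-n^{1/3}},
\]
which is a deterministic bound tending to $0$. Because the bound does not depend on $L$, the convergence holds both almost surely and in $L^{\infty}$, which is strictly stronger than what is claimed.

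This is essentially the same argument appearing in Section~\ref{ssec:upperbound} at \eqref{eq:upperkfinitecasefirst}, where $P(X_{1:n}=v)$ plays the role of the uniform mass $2^{-L}$; here the argument is in fact cleaner since, after conditioning on $l_n$, the distribution of $V_n$ is exactly uniform rather than merely bounded by a likelihood factor. The only point that requires a small amount of care is the conditioning step: one must check that the $\sigma$-algebra generated by $l_n$ is contained in $\sigma(\{\tau_i\},\{Y_i\},\theta)$ so that conditioning on $l_n$ does not perturb the law of the independent $\tilde{Y}$'s from which $V_n$ is built. This is immediate from the construction in Example~\ref{ex:slower}, and it is the only conceptual step; everything after it is a one-line application of Kraft's inequality.
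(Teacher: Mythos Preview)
Your argument is correct and is essentially the same counting idea the paper uses, but you apply it more directly. The paper first establishes $K(V_n)\le C'+\log^*(l_n)+l_n$ to get $P(K(V_n)\ge 2l_n\mid l_n)\to 0$, then splits on the event $\{K(V_n)\le 2l_n\}$ and sums over complexity levels $i=1,\dots,2l_n$ exactly as in \eqref{eq:upperkfinitecasefirst}, obtaining a bound of order $l_n\,2^{-n\delta_n}$. You skip the truncation step entirely and bound the bad set in one shot via Lemma~\ref{Kraft}, giving the $l_n$-free bound $2\cdot 2^{-n^{1/3}}$. This is strictly cleaner: the paper's detour through $\{K(V_n)\le 2l_n\}$ is an artifact of reusing the template from Section~\ref{ssec:upperbound}, where the distribution of $X_{1:n}$ is not uniform and one genuinely needs to sum level-by-level; once you observe that $V_n\mid l_n$ is exactly uniform, the single-shot Kraft count you give is all that is needed.
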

\begin{proof}
First for a certain $l_n$ we can describe $V_n$ by :
\begin{itemize}
\item[(i)] Describe the length of the sequence: $l_n$, with a cost of at most $\log^{*}(l_n)+C$.
\item[(ii)] Describe each of the $l_n$ elements of the sequence, with a cost of at most $l_n$.
\item[(iii)] Telling it how to build the sequence, with a cost of $C'$, where $C'$ is a constant that depends only on the universal machine.
\end{itemize} 
Hence:
$$K(V_n) \le C'+\log^{*}(l_n)+ l_n.$$
And so: $P(K(V_n) \ge 2l_n|l_n) \rightarrow 0.$ Then, we have
\begin{equation*} \begin{split}&
P(K(V_n) \le -\log(P(V_n|l_n))- n\delta_n |l_n) 
\\& \le P(K(V_n) \ge 2l_n|l_n) + P(K(V_n) \le 2l_n, K(V_n) \le -\log(P(V_n|l_n))- n\delta_n |l_n) 
\\& \le  P(K(V_n) \ge 2l_n|l_n) + \sum_{i=1}^{2 l_n} \sum_{ v \mbox{ as } K(v) =i} 2^{\log(P(V|l_n))} \le P(K(V_n) \ge 2l_n|l_n) + 2 l_n 2^{n\delta_n} \rightarrow 0.
\end{split}\end{equation*}
Please note that to pass from the second-line to the third we have used Lemma \ref{Kraft}. Finally,
\[
\mathbb{P} (K(V_n) \leq l_n - n \delta_n | l_n) \rightarrow 0.
\]
Indeed knowing $l_n$, $V_n$ is a sequence of iid bernouilli$(\frac{1}{2})$ and so: 
$-\log(P(V_n)|l_n)=l_n $
\end{proof}

We should note that:  $l_n= (l_1\wedge n) I_{Y_1=0} + \sum_{i \le N_n} \tau_i  I_{Y_i=0} + l_{left}(n) I_{Y_{N_n+1}=0}$.

Combing \eqref{equ:ubexample2}, \eqref{eq:lbexample2}, and Lemma \ref{lem:concentration} we obtain the following upper and lower bounds for $K(X_{1:n})$:
\begin{eqnarray}\label{eq:upperlower}
K(X_{1:n}) &\leq& (N_n+1) (1+C+ \log^*(n)) + l_n. \nonumber \\
K(X_{1:n}) &\geq& l_n- (N_n+1) (1+C+ \log^*(n)) - n\delta_n, \nonumber \\
\end{eqnarray}
where the lower bound holds with probability converging to $1$. Our next goal is to show that with probability converging to one
\begin{eqnarray}\label{eq:Nngoeszero}
\frac{1}{\sqrt{n}} \left( K(X_{1:n}) -  \min(l_1,n) I_{Y_1=0} - \sum_{i \le N_n} \tau_i  I_{Y_i=0} - l_{left}(n) I_{Y_{N_n+1}=0} \right) \rightarrow 0. 
\end{eqnarray}
It is straightforward to confirm that $\frac{n \delta_n}{\sqrt{n}}  \rightarrow 0$. Hence, we only have to prove that $N_n (\log^*(n)+C+1)/\sqrt{n} \rightarrow 0$ (which is going to be true if $N_n \log^*(n)/\sqrt{n} \rightarrow 0$). Toward this goal define $S_n \triangleq \sum_{i=2}^n \tau_i$. Since $S_n$ is a sum of iid variables, it is straightforward to confirm that 
\[
\frac{S_n}{S_{n}^{1+u}} \xrightarrow{a.s} 0,
\]
$0<u< \frac{\frac{1}{2}- \frac{\epsilon}{4} }{\frac{1}{2}-\epsilon} -1.$  Moreover as $\frac{S_n}{S_{n}^{1+u}}=\frac{1}{S_n^u}\le \frac{1}{\tau_1^u}\in L^1$, by dominated convergence theorem we also obtain the $L^1$ convergence. Then we have that by exchangeability of the $(\tau_i)_{i\le n}|S_n$ that
\[
\mathbb{E}(\frac{S_n}{S_{n}^{1+u}})  =\mathbb{E}( \mathbb{E}(\frac{\sum_{i=1}^n\tau_i}{S_{n}^{1+u}}|S_n) )= \mathbb{E}(\frac{n \tau_1}{S_{n}^{1+u}}) \xrightarrow{P} 0. 
\]

Therefore $\frac{n \tau_1}{(S_{n}-\tau_1)^{1+u}}\xrightarrow{L_1} 0$, which implies that \begin{equation*}\begin{split}
 \mathbb{E}(\frac{n \tau_1}{(S_{n}-\tau_1)^{1+u}})&\ge
nP(\frac{ \tau_1}{(S_{n}-\tau_1)^{1+u}}\ge 1) \\&=n\mathbb{E} \big(P(   \tau_1 > (S_{n}-\tau_1)^{1+u} | S_{n}-\tau_1) \big)
\\& \ge K'n \mathbb{E}( S_{n}^{-(1+u)(\frac{1}{2}+\epsilon)})\rightarrow0,
\end{split}\end{equation*} 
where we have used the fact that 
there is a constant $K'$ such that any fixed $b$, $P(|\tau_1|>b) \geq K' b^{-(\frac{1}{2}-\epsilon)} $

Hence, 
\begin{equation}\label{eq:Sngrand}
\mathbb{E} ( S_{n}^{-(\frac{1}{2}-\epsilon)(1+u)}) = o(\frac{1}{n}). 
\end{equation}
By employing Markov inequality we obtain $S_{n}^{-1} = o_p(n^{-(1+u)^{-1}(\frac{1}{2}-\epsilon)^{-1}})$. Note that if we have $m \triangleq \lfloor n^{\frac{1}{2}-\frac{\epsilon}{4}}\rfloor.$
\begin{equation*}\begin{split}&
\mathbb{P} (N_n > n^{1/2- \epsilon/4}) \le \mathbb{P} (S_{\lfloor n^{\frac{1}{2}-\frac{\epsilon}{4}}\rfloor} \le n )
\\& \le \mathbb{P} (S_{m} \le m^{(\frac{1}{2}-\frac{\epsilon}{4})^{-1}})  \le \mathbb{P} (1 \le S_{m}^{-1} (m^{(\frac{1}{2}-\frac{\epsilon}{4})^{-1}})) \rightarrow 0 
\end{split}\end{equation*}
Where the last equation comes from Equation \ref{eq:Sngrand} and $(\frac{1}{2}-\frac{\epsilon}{4})^{-1} < \frac{1}{(\frac{1}{2}-\epsilon)(1+u)}$. Hence, it is straightforward to conclude that 
\begin{equation}\label{eq:convergenceNn}
\frac{N_n \log^{*}(n)}{\sqrt{n}} \rightarrow 0. 
\end{equation}

This completes the proof of \eqref{eq:Nngoeszero}.

It is straightforward to prove that the entropy rate of this process is $1/2$. Hence, we would like to show that 
\[
\sqrt{n} \left(\frac{K(X_1, X_2, \ldots, X_n)}{n} - \frac{1}{2}\right), 
\]
is $\omega(1)$. Suppose that this is not the case, then by using Prohorov's theorem the sequence is tight and the sequence $\sqrt{n} (\frac{K(X_1, X_2, \ldots, X_n)}{n} - \frac{1}{2})$ will have a subsequence that converges almost surely. To simplify the notation, instead of working with the convergent subsequence we assume that the entire sequence converges in distribution. Since
 \begin{eqnarray}
\lefteqn{\sqrt{n} \left( \frac{\min(l_1,n) I_{Y_1=0} - \sum_{i \le N_n} \tau_i  I_{Y_i=0} - l_{left}(n) I_{Y_{N_n+1}=0}}{n} -\frac{1}{2} \right)} \nonumber \\
& =& \sqrt{n} \left(\frac{K(X_1, X_2, \ldots, X_n)}{n} - \frac{1}{2} \right) \nonumber \\
&&+  \frac{1}{\sqrt{n}} \left( K(X_{1:n}) -  \min(l_1,n) I_{Y_1=0} - \sum_{i \le N_n} \tau_i  I_{Y_i=0} - l_{left}(n) I_{Y_{N_n+1}=0} \right)
\end{eqnarray}
and according to \eqref{eq:Nngoeszero}: $$ \frac{1}{\sqrt{n}} \left( K(X_{1:n}) -  \min(l_1,n) I_{Y_1=0} - \sum_{i \le N_n} \tau_i  I_{Y_i=0} - l_{left}(n) I_{Y_{N_n+1}=0} \right) \xrightarrow{P} 0,$$
  and we have assumed that  $\sqrt{n} (\frac{K(X_1, X_2, \ldots, X_n)}{n} - \frac{1}{2})$ converges in distribution, we can use Slutsky's theorem and claim that $\sqrt{n} \left( \frac{\min(l_1,n) I_{Y_1=0} - \sum_{i \le N_n} \tau_i  I_{Y_i=0} + l_{left}(n) I_{Y_{N_n+1}=0}}{n} -\frac{1}{2} \right)$ converges in distribution. Note that $l_1  I_{Y_1=0} < \tau_1$ and $l_{left}(n)I_{Y_{N_n+1}=0} < \tau_{N_n+1}$. Therefore, 
\[
  \frac{\min (l_1,n)I_{Y_1=0} + l_{left}(n) I_{Y_{N_n+1}=0}  }{\sqrt{n}}  \xrightarrow{a.s.} 0.
\]
Hence
\[
\sqrt{n} \left( \frac{\min(l_1,n) I_{Y_1=0} + l_{left}(n) I_{Y_{N_n+1}=0}}{n} -\frac{1}{2} \right) \rightarrow 0,
\]
and our analyses reduces to the analysis of $\sqrt{n} \left( \frac{ \sum_{i \le N_n} \tau_i  I_{Y_i=0} }{n} -\frac{1}{2} \right)$. 
Note that
 \begin{eqnarray}
\lefteqn{\sqrt{n} \left(\mathbb{E}  \left( \frac{ \sum_{i \le N_n} \tau_i  I_{Y_i=0} }{n}  | \ N_n, \tau_1, \tau_2, \ldots, \tau_{N_n}\right) -0.5 \right)} \nonumber \\
&=& \sqrt{n} \left( \mathbb{E}  \left( \frac{ \sum_{i \le N_n} \tau_i  I_{Y_i=0} }{n} \ | \ N_n, \tau_1, \tau_2, \ldots, \tau_{N_n}\right)-0.5 \right) \nonumber \\
&=&\sqrt{n} \left( \frac{1}{2}  \frac{\sum_{i\leq N_n} \tau_i }{n}-\frac{1}{2} \right) \nonumber = \frac{\sqrt{n}}{2}  \frac{\sum_{i\leq N_n} \tau_i -n}{n}  =  \frac{l_1 +l_{left}(n)}{2\sqrt{n}} \xrightarrow{a.s.} 0\nonumber. 
\end{eqnarray}
 Hence we discuss the limiting distribution of the following quantity:
\[
\sqrt{n} \left( \frac{ \sum_{i \le N_n} \tau_i  I_{Y_i=0} }{n} - \mathbb{E} \left(\frac{ \sum_{i \le N_n} \tau_i  I_{Y_i=0} }{n}  | \ N_n, \tau_1, \tau_2, \ldots, \tau_{N_n} \right)\right).
\]

Toward that goal we will first  introduce the following sigma-fields: 
$$F_l\triangleq \sigma(\tau_i,I_{i\le N_l} I_{Y_i=0},~i \in \mathbb{N}),$$ and the processes  $$Y_l^n\triangleq \frac{1}{\sqrt{\sum_i \tau_i^2 I_{i \le N_n}} } \sum_{i} \tau_i I_{i \le N_l}   (I_{Y_i=0}-\frac{1}{2}).$$

 It is straightforward to see that $((Y^n_l,F_l)_l)_n$ is a triangular array of  martingales. The corresponding martingale differences are given by 
 $$X_{n,i}\triangleq \frac{1}{\sqrt{\sum_i \tau_i^2 I_{i \le N_n}} }\sum_{j=1}^{\infty} \tau_j I_{N_{i-1}<j \le N_i}   (I_{Y_i=0}-\frac{1}{2}). $$ 
We would now like to use Theorem \ref{Martingale}. It is straightforward to check that
$$\frac{1}{\sum_i \tau_i^2 I_{i \le N_n} } \sum_{i=1}^{n} \mathbb{E}( (\sum_j \tau_j I_{N_{i-1}<j \le N_i}   (I_{Y_i=0}-\frac{1}{2}) )^2 |F_{i-1} )=\frac{1}{4}.$$

Furthermore, we have to prove the following claim:
\begin{equation}\label{eq:condthm11}
\forall \epsilon>0,~\sum_{i\le n} \mathbb{E}(\big|\frac{\sum_j \tau_j I_{N_{i-1}<j \le N_i}   (I_{Y_i=0}-\frac{1}{2}) }{\sqrt{\sum_i \tau_i^2 I_{i \le N_n}}} \big|^2I_{|\frac{\sum_j \tau_j I_{N_{i-1}<j \le N_i}   (I_{Y_i=0}-\frac{1}{2}) }{\sqrt{\sum_i \tau_i^2 I_{i \le N_n}}}|>\epsilon}|F_{i-1})\xrightarrow{P} 0. 
\end{equation}
Toward this goal, note that 
\begin{equation*} \begin{split}&
\sum_{i\le n} \mathbb{E} \left(\Big|\frac{\sum_j \tau_j I_{N_{i-1}<j \le N_i}   (I_{Y_i=0}-\frac{1}{2}) }{\sqrt{\sum_i \tau_i^2 I_{i \le N_n}}} \Big|^2I_{\Big|\frac{\sum_j \tau_j I_{N_{i-1}<j \le N_i}   (I_{Y_i=0}-\frac{1}{2}) }{\sqrt{\sum_i \tau_i^2 I_{i \le N_n}}}\Big|>\epsilon}|F_{i-1}\right)
\\& \overset{(a)}{=}\sum_{i\le n} \frac{1}{\sum_i \tau_i^2 I_{i \le N_n}} \mathbb{E}(  \big|\sum_j \tau_j I_{N_{i-1}<j \le N_i}   (I_{Y_i=0}-\frac{1}{2} )\big|^2I_{|\frac{\sum_j \tau_j I_{N_{i-1}<j \le N_i}   (I_{Y_i=0}-\frac{1}{2}) }{\sqrt{\sum_i \tau_i^2 I_{i \le N_n}}}|>\epsilon}|F_{i-1})
\\ & \overset{(b)}{\le} \sum_{i\le N_n} \frac{1}{\sum_i \tau_i^2 I_{i \le N_n}} \tau_{i}^2 P(|\frac{\sum_j \tau_j I_{N_{i-1}<j \le N_i}   (I_{Y_i=0}-\frac{1}{2}) }{\sqrt{\sum_i \tau_i^2 I_{i \le N_n}}}|>\epsilon|F_{i-1})
\\& \le \sum_{i\le N_n} \frac{1}{\sum_i \tau_{i}^2 I_{i \le N_n}} \tau_{i}^2 I_{\frac{\tau_{i}^2}{\sum_i \tau_i^2 I_{i \le N_n}} \ge \epsilon^2} \le \max_i I_{\frac{\tau_{i}^2}{\sum_i \tau_i^2 I_{i \le N_n}} \ge \epsilon^2}
 \le I_{\max_i \frac{\tau_{i}^2}{\sum_i \tau_i^2 I_{i \le N_n}}\ge \epsilon^2}.
\end{split}\end{equation*}
Note that to obtain Equality (a) we used the fact that $\tau_1, \tau_2, \ldots$ are $F_i$ measurable and hence so is $N_n$. To obtain Inequality (b) we used the fact that for a fixed $i$ the difference between $N_{i-1}$ and $N_{i}$ is at most one and also $|I_{Y_i=0}-\frac{1}{2}|<1$. Finally, it is straightforward to see that $P(\max_i \frac{\tau_{i}^2}{\sum_i \tau_i^2 I_{i \le N_n}}\ge \epsilon^2) \rightarrow 0$, which proves \eqref{eq:condthm11}. 

According to Theorem Theorem \ref{Martingale} we have
\[
\frac{1}{\sqrt{\sum_i \tau_i^2 I_{i \le N_n}} } \sum_{i} \tau_i I_{i \le N_n}   (I_{Y_i=0}-\frac{1}{2}) \xrightarrow{d} N(0,\frac{1}{4}).
\]

Hence if 

\[
\sqrt{n} \left( \frac{ \sum_{i \le N_n} \tau_i  I_{Y_i=0} }{n} - \mathbb{E} \Big(\frac{ \sum_{i \le N_n} \tau_i  I_{Y_i=0} }{n}  | \ N_n, \tau_1, \tau_2, \ldots, \tau_{N_n}\Big)\right)
\]converges to a non-degenerate distribution we need: $\sum_i \tau_i^2 I_{i \le N_n}=\Theta(n)$. However, by Cauchy-Swartz we can easily see that: $\sum_i \tau_i^2 I_{i \le N_n} \ge \frac{1}{N_n} ( \sum_{i\le N_n} \tau_i) ^2=\frac{1}{N_n} (n-l_1-left(n) )^2 $. Therefore $\frac{\sum_i \tau_i^2 I_{i \le N_n}}{n}\ge \frac{\frac{1}{N_n} (n-l_1-left(n) )^2}{N_n\times n} \rightarrow \infty$. This contradiction proves that the speed of convergence is slower than $n^{-\frac{1}{2}}$.

\bibliographystyle{unsrt}
\bibliography{../myrefs}

\end{document}